\newtheorem{theorem}{Theorem}[section]
\newtheorem{lemma}[theorem]{Lemma}
\newtheorem{corollary}[theorem]{Corollary}
\newtheorem{definition}[theorem]{Definition}
\newcommand{\sq}{\hbox{\rlap{$\sqcap$}$\sqcup$}}
\newcommand{\qed}{\hspace*{\fill}\sq}
\newenvironment{proof}{\noindent {\bf Proof.}\ }{\qed\par\vskip 4mm\par}
\begin{document}
 
\date{}
%\begin{titlepage}
\begin{title}
{A Deterministic Worst-Case Message Complexity Optimal Solution for Resource Discovery\thanks{This work was partially supported by the German Research Foundation (DFG) within the
Collaborative Research Centre �On-The-Fly Computing� (SFB 901).}}
\end{title}

\author{Sebastian Kniesburges \\
   University of Paderborn \\
   seppel@upb.de \\
   \and
   Andreas Koutsopoulos \\
   University of Paderborn  \\
   koutsopo@mail.upb.de\\
   \and
   Christian Scheideler\\
   University of Paderborn  \\
   scheideler@mail.upb.de
  }

%\author{Sebastian Kniesburges \and Andreas Koutsopoulos  \and Christian Scheideler}
%\institute{Department of 
%Computer Science,
%University of Paderborn,
%%D-33102 Paderborn\\
%Germany
%\email{seppel@upb.de,koutsopo@mail.upb.de,scheideler@mail.upb.de}
%}

\maketitle %\thispagestyle{empty}

%=========================================================================
%  Abstract
%=========================================================================

\begin{abstract}
We consider the problem of resource discovery in distributed systems. In particular we give an algorithm, such that each node in a network discovers the address of any other node in the network. We model the knowledge of the nodes as a virtual overlay network given by a directed graph such that complete knowledge of all nodes corresponds to a complete graph in the overlay network. Although there are several solutions for resource discovery, our solution is the first that achieves worst-case optimal work for each node, i.e. the number of addresses ($\mathcal O(n)$) or bits ($\mathcal O(n\log n)$) a node receives or sends coincides with the lower bound, while ensuring only a linear runtime ($\mathcal O(n)$) on the number of rounds.
%\keywords{distributed algorithms, resource discovery, self-stabilization, clique network}
\end{abstract}

%\bigskip

%\end{titlepage}

%=========================================================================
%  Introduction
%=========================================================================

\section{Introduction}

To perform cooperative tasks in distributed systems the network nodes have to know which other nodes are participating. Examples for such cooperative tasks range from fundamental problems such as group-based cryptography \cite{MSU08}, verifiable secret sharing \cite{CGMA85}, distributed consensus \cite{PSL80}, and broadcasting \cite{RC05}
 to peer-to-peer(P2P) applications like distributed storage, multiplayer online gaming, and various social network applications such as chat groups. To perform these tasks efficiently knowledge of the complete network for each node is assumed. Considering large-scale, real-world networks this complete knowledge has to be maintained despite high dynamics, such as joining or leaving nodes, that lead to changing topologies. Therefore the nodes in a network need to learn about all other nodes currently in the network.
This problem called \emph{resource discovery}, i.e. the discovery of the addresses of all nodes in the network by every single node, is a well studied problem and was firstly introduced by Harchol-Balter, Leighton and Lewin in \cite{HBLL99:discovery}. 

\subsection{Resource Discovery}
As mentioned in \cite{HBLL99:discovery} the resource discovery problem can be solved by a simple swamping algorithm also known as {\em pointer doubling}: in each round, every node informs all of its neighbors about its entire neighborhood. While this just needs $O(\log n)$ communication rounds to inform every node about any other node in every weakly connected network of size $n$,
 the work spent by the nodes can be very high and far from optimal. 
We measure the work of a node as the number of addresses each node receives or sends while executing the algorithm. Moreover, in the stable state (i.e., each node has complete knowledge) the work spent by every node in a single round is $\Theta(n^2)$, which is certainly not useful for large-scale systems. Alternatively, each node may just introduce a single neighbor to all of its neighbors in a round-robin fashion. However, it is easy to construct initial situations in which this strategy is not better than pointer doubling in order to reach complete knowledge. The problem in both approaches is the high amount of redundancy: addresses of nodes may be sent to other nodes that are already aware of that address.
In \cite{HBLL99:discovery} a randomized algorithm called the \emph{Name-Dropper} is presented that solves the resource discovery problem within $\mathcal O(\log^2 n)$ rounds w.h.p. and work of $\mathcal O(n^2 \log^2 n)$. 
In \cite{KPV01:det-disc} a deterministic solution for resource discovery in distributed networks was proposed by Kutten et al. Their solution uses the same model as in \cite{HBLL99:discovery} and improves the number of communication rounds to 
which takes $\mathcal O(\log n)$ rounds and $\mathcal O(n^2\log n)$ amount of work. Konwar et al. presented solutions for the resource discovery problem considering different models, i.e. multicast or unicast abilities and messages of different sizes, where the upper bound for the work is $O(n^2\log^2 n) $. In their algorithms they also considered when to terminate, i.e. how can a node detect that its knowledge is already complete. 
Recently resource discovery has been studied by Haeupler et. al. in \cite{Raj}, in which they present two simple randomized algorithms based on gossiping that need $\Omega (n\log n)$ time and $\Omega (n^2 \log n)$ work per node on expectation. They only allow nodes to send a single message containing at most one address of size $\log n$ in each round. Thus their model is more restrictive compared to the model used in \cite{HBLL99:discovery,KPV01:det-disc} and leads to an increased runtime in the number of rounds. 
We present a deterministic solution that follows the idea of \cite{Raj} and limits the number of messages each node has to send and the number of addresses transmitted in one message. Our goal is to reduce the number of messages sent and received by each node such that we avoid nodes to be overloaded. In detail we show that resource discovery can be solved in $\mathcal O(n)$ rounds and it suffices that each node sends and receives $\mathcal O(n)$ messages in total, each message containing $\mathcal O(1)$ addresses. Our solution is the first solution for resource discovery that not only considers the total number of messages but also the number of messages a single node has to send or receive.
Note that $\Omega (n)$ is a trivial lower bound for the work of each node to gain complete knowledge: starting with a list, in which each node is only connected to two other nodes, each node has to receive at least $n-3$ IDs. 
So our algorithm is worst case optimal in terms of message complexity. Furthermore our algorithm can handle the deletion of edges and joining or leaving nodes, as long as the graph remains weakly connected. 
Modeling the current knowledge of all nodes as a directed graph, i.e. there is an edge $(u,v)$ iff $u$ knows $v$'s ID, one can think of resource discovery as building and maintaining a complete graph, a clique, as a virtual overlay network. If the overlay can be recovered out of any (weakly connected) initial graph, the corresponding algorithm can be considered to be a \emph{self-stabilizing} algorithm. More precisely, an algorithm is considered as self-stabilizing if it reaches a legal state when started in an arbitrary initial state (\emph{convergence}) and stays in a legal state when started in a legal state (\emph{closure}). 

\subsection{Topological Self-Stabilization}
There is a large body of literature on how to efficiently maintain overlay
networks, e.g., \cite{AS03:skip,AS04:hyperring,BK+04:Pagoda,
RD01:pastry,HJ+03:skipnet,KSW05:self-repair,
MNR02:viceroy,NW07:continous-discrete,RF+01:can,SM+01:chord,SS09:oblivious-heap}. While many results are already known on how
to keep an overlay network in a legal state, far less is known about
self-stabilizing overlay networks. The idea of self-stabilization in
distributed computing first appeared in a classical paper by E.W. Dijkstra in
1974 \cite{D74:self-stab} in which he looked at the problem of
self-stabilization in a token ring. Interestingly, though self-stabilizing
distributed computing has received a lot of attention for many years, the
problem of designing self-stabilizing networks has attracted much less attention. 
In order to recover certain network topologies from any weakly connected network, researchers have started with simple line and ring networks, \cite{CF05:stab-ring,SR05:ring}.
The Iterative Successor Pointer Rewiring Protocol \cite{CF05:stab-ring} and the Ring Network \cite{SR05:ring}, for example, organize the nodes in a sorted ring. 
In \cite{DK08} Dolev and Kat describe a strategy to  build a hypertree with a polylogarithmic degree and search time. 
In \cite{ORS07:lin}, Onus et al. present a local-control strategy called linearization for converting an arbitrary connected graph into a sorted list.
Various self-stabilzing algorithms for different network overlay structures have been considered over the years  \cite{JRS+09:delaunay,JR+09:skip+,DT09,DT10,DK08}.
Jacob et al. \cite{JRS+09:delaunay} generalize insights gained from graph linearization to two dimensions and present a self-stabilizing construction for Delaunay graphs. In another paper, Jacob et al. \cite{JR+09:skip+} present a self-stabilizing variant of the skip graph and show that it can recover its network topology from any weakly connected state in $\mathcal O(\log^2 n)$ communication rounds with high probability. In \cite{DT09} and \cite{DT10} Dolev and Tzachar show self-stabilizing algorithms for forming subgraphs like clusters  or expanders in just polylogarithmic number of rounds. 
In \cite{DT09} the authors use a self-stabilizing algorithm in which they collect snapshots of the network along a spanning tree, which could also be used to form a complete graph. However, the authors give no bounds on the message complexity of their algorithm.
In \cite{BGP10:frame} the authors present a general
framework for the self-stabilizing construction of overlay networks, which may involves the construction of the clique.
The algorithm requires the knowledge of the 2-hop neighborhood for each node and
may involve the construction of a clique. In that way, failures at the
structure of the overlay network can easily be detected and repaired. 
However, the work in order to do that when using this method is too high as they essentially use pointer doubling, i.e. in each round a node sends the information about its neighborhood to all its neighbors.

One could use the distributed algorithms for self-stabilizing lists and rings to form a complete graph, but all algorithms proposed so far for these topologies involve a worst-case work of $\Omega(n^2)$ per node in order to form the list or ring. Hence, these algorithms cannot be used to obtain an efficient algorithm for the clique.

Alternatively, a self-stabilizing spanning tree algorithm could be used. A
large number of self-stabilizing distributed algorithms has already been
proposed for the formation of spanning trees in static network topologies, \cite{BPR11}, \cite{BDPR10}, \cite{HLPPB07}, \cite{HLPPB07}.
For example in \cite{BPR11} the authors present a self-stabilizing spanning tree with minimal degree for the given network and in \cite{BDPR10} a fast algorithm for a self-stabilizing spanning tree is presented, which reaches optimal convergence time $\mathcal O(n^2)$ in an asynchronous setting.
However, these spanning trees are either expensive to maintain or the amount of work in these algorithms is not being considered.

However, these spanning trees are potentially expensive to maintain as a high
degree cannot be avoided in general (consider, for example, the extreme case
of a star graph in which a single node is connected to all other nodes). For
the case that the network topology is flexible and potentially allows every
node to connect to any other node, self-stabilizing algorithms are known that
construct a bounded degree spanning tree (e.g., \cite{HLPPB07}). The algorithm
in \cite{HLPPB07} also has a very low overhead in the stable state. But no
formal result is given on the work to establish the spanning tree. Also, an
outside rendezvous service, called an oracle, is used to introduce nodes to other nodes, which is not available in our model.

In summary, no self-stabilizing algorithm has been presented for the formation
of a bounded degree spanning tree if the network topology is under the control
of the nodes and there are no outside services for the introduction of nodes.

\subsection{Our model}
We use the network model used in \cite{HBLL99:discovery,KPV01:det-disc,Raj}. In the following we give a detailed description of the model. 
We model the network as a directed graph $G = (V,E)$ where $|V| = n$. The
nodes have unique identifiers with a total order, and these identifiers are
assumed to be immutable (for example, we may use the IP addresses of the
nodes). We are using a standard synchronous message-passing model: time proceeds in synchronous rounds, and all messages generated in round $i$ are delivered at the end of round $i$. In order to deliver a message, a node may use any address stored in its local variables. In each
round, each node can only inspect its local variables (i.e. it can only
communicate with nodes that it knows). Beyond that, a node does not have
access to any information or services which means, for example, that 
No a priori information about the size or diameter of the network can be assumed by a node
and there cannot be made use of some outside rendezvous service to get
introduced to other nodes. Hence, the {\em state} of a node is fully
determined by its local variables. Like in \cite{HBLL99:discovery,KPV01:det-disc,Raj} we assume that a node can verify its neighborhood without extra work, such that there are no false identifiers in the network. 
Only local topology changes are allowed, i.e. a node may decide to cut a link
to a neighbor (by deleting its address) or introduce a link to one of its
neighbors (by sending it an address). We model the decisions to cut or establish links and to send messages as actions. An action has the form $<guard>\rightarrow <commands>$.
A guard is a Boolean expression over the state of the node. The commands are
executed if the guard is true. Any action whose guard is true is said to be
{\em enabled}. We assume that a node can execute all of its enabled actions in the current round.

The {\em state} of the system is the combination of the states of all nodes in
the system. Due to our synchronous message-passing model, in which no message
is still in transit at the beginning of a round, the state of the system
and contains all the information available in the system. A {\em computation} is a
sequence of system states such that for each state $s_i$ at the beginning of
round $i$, the next state $s_{i+1}$ is obtained after executing all actions
that are enabled at the beginning of round $i$ and receiving all messages that
they generated. We call a distributed algorithm {\em self-stabilizing} if from
any initial state in which the overlay network is weakly connected, it
eventually reaches a legal state and stays in a legal state afterwards.
In our case, the legal state is the clique topology. Since the clique topology is uniquely defined, no more topological changes will happen afterwards.Our goal is to develop algorithms that need as few communication rounds and as little work as possible to arrive at a clique. 
We distinguish between two types of work. The {\em stabilization work} of a node $v$ is defined as the
total number of addresses sent and received by $v$ during the stabilization
process. The {\em maintenance work} of a node $v$ is defined as the maximum
number of addresses sent and received by $v$ during a single round of the
stable state, i.e. for the case that a clique has been formed.

\subsection{Our contributions}

In this paper we present a distributed algorithm for resource discovery. We will describe the algorithm as a self-stabilizing algorithm that forms and maintains a clique as a virtual overlay network. In particular, the following theorem shows that our algorithm is worst-case optimal in terms of message complexity.

\begin{theorem}
For any initial state in which the network is weakly connected, our algorithm
requires at most $\mathcal O(n)$ rounds and $\mathcal O(n)$ work per node
until the network reaches a legal state in which it forms a clique.
\end{theorem}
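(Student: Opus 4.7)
The plan is to decompose the proof into two phases that may run concurrently but can be analyzed separately: a \emph{linearization} phase in which the nodes organize themselves into a sorted doubly-linked list according to their IDs, and a \emph{propagation} phase in which every ID is forwarded along this list until every node knows it. For the round bound, I would exhibit a potential function that measures how far the current graph is from a sorted list---for example, the sum over all directed edges $(u,v)$ of the number of node IDs that lie between $u$ and $v$ in the total order---and show that this potential strictly decreases each round due to the local linearization rules (introducing a neighbor to a closer neighbor and then cutting the longer link). Since the initial potential is at most $\mathcal O(n^2)$ and the algorithm only transmits $\mathcal O(1)$ addresses per node per round, I would combine this with a pipelining argument to conclude that both the sorted list is built and all IDs are propagated end-to-end in $\mathcal O(n)$ rounds.

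For the work bound, the key structural claim is that every address forwarded by a node $v$ can be \emph{charged} to a unique object that disappears afterwards: either to a neighbor that $v$ subsequently removes from its neighborhood, or to a new entry in a bounded FIFO buffer that $v$ uses to schedule introductions along the list. I would show that in each round $v$ fires only $\mathcal O(1)$ actions, each sending $\mathcal O(1)$ addresses, so that the per-round work is constant. Summed over $\mathcal O(n)$ rounds this already gives $\mathcal O(n)$ sent addresses; the bound on received addresses then follows by a symmetric accounting together with the observation that every in-edge $(u,v)$ contributes to $v$'s received work only a constant number of times before it is either discarded or converted into a permanent list edge.

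The part I expect to be the main obstacle is ruling out redundant transmissions in the initial phase, when a node may have an arbitrarily large (e.g., $\Theta(n)$) initial in- or out-degree. A naive linearization such as that of Onus et al.\ can incur $\Omega(n^2)$ work precisely because a high-degree node keeps being re-introduced to the same neighbors. To overcome this, I would argue that the algorithm only \emph{forwards} each known address to at most one neighbor per round (the immediate predecessor/successor candidate in the sort order) rather than broadcasting to all neighbors, and that once an address has been forwarded past $v$ it is removed from $v$'s outgoing responsibility. The amortized analysis would then charge each forwarded address to a unit decrease of the global linearization potential, yielding the $\mathcal O(n)$ bound on work per node.

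Finally, I would verify \emph{closure}: once the list is sorted and fully propagated, every node has complete knowledge, so the clique is formed and no guard remains enabled, giving the stable legal state claimed by the theorem.
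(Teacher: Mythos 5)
There is a genuine gap, and it is in the round bound. Your plan rests on a single global potential (the sum over edges of the number of IDs lying between the endpoints) that "strictly decreases each round." Even granting the strict decrease, that potential is $\Omega(n^2)$ already for a sparse weakly connected initial graph (and up to $\Theta(n^3)$ in general), so a unit decrease per round yields only an $O(n^2)$ (or worse) round bound; the "pipelining argument" you invoke to get from there to $O(n)$ is precisely the part that needs the real work and is left unspecified. The paper avoids this by never using a single quadratic-size potential: it tracks, for each edge $(u,v)$ of the \emph{initial} graph crossing two predecessor-trees ("heaps"), a delivery-time quantity $\phi^t(u,v)=P^t(u)+ID^t(u,v)+LT^t(u)$ that is at most $3n$ and decreases every round (Lemma~\ref{lem:phi}), which s-connects all heaps in $O(n)$ rounds; it then uses a second potential $\Lambda$ on s-connectivity sets, bounded by $4n+1$ and strictly decreasing per round (Lemmas~\ref{lem:s-edge connectivity},~\ref{lem:scon}), to merge everything into one heap; only then does linearization plus round-robin forwarding from the root give the clique in $O(n)$ more rounds. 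Related to this, your sketch has no mechanism (and hence no analysis) for how the many local maxima that initially believe themselves to be roots ever discover one another; in the paper this is exactly the scan/scanack machinery and its analysis is the bulk of the correctness proof, not a corollary of local linearization rules, which by themselves (Onus et al.-style) also would not give the per-node work bound.

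The work bound also needs more than the symmetric accounting you describe. Bounding \emph{sent} addresses by $O(1)$ per round is easy; the danger is on the receive side, where a node may be contacted in one round by arbitrarily many other nodes (many pred-requests from nodes choosing it as predecessor, many scans from distinct heads), and this could recur over $\Theta(n)$ rounds. The paper's Lemmas~\ref{lem:stab1}--\ref{lem:stab3} rule this out only via protocol-specific features: a scanned node replies with the \emph{maximum} id it knows, which forces all but the largest scanning heads to stop being heads (so total received scans are $O(n)$), and a node sends each other node at most one new-predecessor message because the id of a node's predecessor is monotonically decreasing. Your claim that "every in-edge contributes $O(1)$ received work" is essentially the statement of these lemmas, not an argument for them, and scanning heads are not in-edges fixed in advance, so the charging scheme you propose does not obviously apply to them.
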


We further show that the maintenance cost per round is $\mathcal O(1)$ for
each node once a legal state has been reached. We also consider topology
updates caused by a single joining or leaving node and show that the network
recovers in $\mathcal O(n)$ rounds with at most $\mathcal O(n)$ messages over
all nodes besides the maintenance work. Note that we use a synchronous message passing model to give bounds on the message complexity of our algorithm, but our correctness analysis can also be applied to an asynchronous setting.

\subsection{Structure of the paper}

The paper is structured as follows: In Section 2 we give a description of our algorithm. In Section 3 we prove that the algorithm is self-stabilizing. We consider the stabilization work and maintenance work in Section 4. In Section 5 we analyze the steps needed for the network to recover after a node joins or leaves the network. Finally, in Section 6 we end with a conclusion.

%=========================================================================

\section{A distributed self-stabilizing algorithm for the clique}

%\subsection{Algorithm}

In this section we give a general description of our algorithm. First we introduce the variables being used, and then the actions the nodes take, according to our rules.
Each node $x$ has a buffer $B(x)$ for incoming messages from the previous round. We assume that the buffer capacity
is unbounded and no messages are lost. We do not require any particular order
in which the messages are processed in $B(x)$. 
Moreover, each node $x$ stores the following internal variables: its predecessor $p(x)$ , its successor $s(x)$, its current neighborhood $N(x)$ in a circular list, the nodes received by messages from the predecessor in another circular list $L(x)$, the set of nodes $S(x)$ that are received through scanning messages (defined below), its own identifier $id(x)$ and its status $status(x)$, which is by default set to 'inactive' and can be changed to 'active'. The current network $G=(V,E)$ formed by the nodes is defined by their current neighborhoods $N(v)$. We only require that $N(v)$ does not contain false ids, since in that case the stabilization time could be delayed.

A message in general consists of the following parts: a \emph{sender id}, which is the id of the node sending the message, an optional \emph{additional id}, if the sender wants to inform the receiving node about another node, and  the \emph{type} of the message.

Each node has two different kinds of actions that we call \emph{receive} actions
and \emph{periodic} actions. A receive action is enabled if there is an incoming message of the corresponding type in the buffer $B(x)$. There are the
following types of messages: \emph{pred-request, pred-accept, new-predecessor,
deactivate, activate, forward-from-successor, forward-from-predecessor,
forward-head, scan, scanack, delete-successor}.
A periodic action is enabled in
every state, as its guard is simply \emph{true}. Therefore there can be no state in the computation in which no action is enabled. 
Each enabled action is executed once every step.

\subsection{Definitions}
In order to describe the algorithm formally and prove its correctness later on, we need the definitions given below. In this paper we assume that a predecessor of a node is a node with the next larger identifier. Therefore for all $p(x)$ links, $p(x)>x$. Then all nodes in a connected component considering only $p(x)$ links form a rooted tree, where for each tree the root has the largest identifier. Note here that the \emph{heap $H$}  (defined below) is not a data structure or variable stored by any node. It is a notion used just for the purpose of the analysis.

\begin{definition}
We call such a rooted tree formed by $p(x)$ links a \emph{heap $H$}. We further call the root of the tree the \emph{head $h$} of the heap $H$. We further denote with \emph{$heap(x)$} the heap $H$ such that $x\in H$.
\end{definition}

\begin{definition}
A \emph{sorted list} is a heap $H$ with head $h$, such that $\forall v\in H-\left\{h\right\}: p(v)>v$ and $\forall v\in H-\left\{h\right\}: s(p(v))=v$.
We call a heap \emph{linearized w.r.t. a node $u\in H$}, if $\forall v\in H-\left\{h\right\}: p(v)>v$ and $\forall v\in H-\left\{h\right\}\wedge v\geq u: s(p(v))=v$. We further call the time until a heap is linearized w.r.t. a node $u$ the \emph{linearization time of $u$}.
We say that two heaps $H_i$ and $H_j$ are \emph{merged} if all nodes in $H_i$ and $H_j$ form one heap $H$.
\end{definition}

\subsection{Description of our algorithm}
We only present the intuition behind our algorithm. The full pseudocode is in Appendix ~\ref{app:alg}. Our primary goal is to collect the addresses of all nodes in the system at the node of maximum id,
which we also call the {\em root}. In order to efficiently distribute the
addresses from this root to all other nodes in the system (so that all ids are known to every node and a clique is formed), 
we aim at organizing them into a spanning tree of constant degree, which in our case is
a sorted list, ordered in descending ids. The root would
then be the head of the list. In order to reach a sorted list, we first
organize the nodes in rooted trees satisfying the max-heap property, i.e. a
parent (also called {\em predecessor} in the following) of a node has a higher
id than the node itself. The rooted trees will then be merged and linearized
over time so that they ultimately form a single sorted list.

Since we want to minimize our message complexity, we had to look for a
technique other than the linearization technique presented in
\cite{ORS07:lin}. So in our protocol, in order to minimize the amount of
messages sent by the nodes, we allow a node in each round to share information
only with its immediate \emph{successor} $s(x)$ (which is one of the nodes
that considers it as its predecessor) and \emph{predecessor} $p(x)$.
More precisely, in each round a node forwards one of its neighbors (i.e. the
nodes it knows about) in a round-robin manner to its predecessor. The
intuition behind this is that if every node does that sufficiently often,
eventually the root will learn about all ids in the system and will forward
this information in a round-robin manner to its successor, who will then
forward it to its successor, and so on.

In order for this process to work, each node must repeatedly compute and
update its successor and predecessor. This is done as follows: 
Each node chooses the smallest node in its neighborhood that is larger than itself as its predecessor and requests from it to accept it as successor ($pred-request$
message). Each node also looks at the nodes which requested to be its
successor, assigns the largest of them as its successor ($pred-accept$) and
forwards the rest to it ($new-predecessor$). In that way each node has
at most one predecessor and one successor at the end of one round.

We also need to ensure that there exists a path of successors from the root to
all other nodes so that the information can be forwarded to all. This is
initially not the case since there exist many nodes that are the largest in
their known neighborhood, thinking they are the root. We call these nodes
$heads$. All the nodes having the same head as an ancestor form a $heap$. The
challenge is to $merge$ all heaps into one, since then we have only one head, the root.
In order to enable the merging of the heaps, the heads continuously scan their
neighborhood.
A node that receives a $scan$ message responds by sending the largest node in
its neighborhood through a $scanack$ message to the node that sent that $scan$
message (could be possibly more than one). Moreover, in each round, the
largest node is also forwarded to its predecessor ($forward-head$), which in
turn forwards it again to its predecessor, and so on.

We further discuss the process of forwarding an id to a node's
predecessor/successor. Note that when a node forwards an id through a
$forward-from-successor$ resp. $forward-from-predecessor$, the id sent is the one
at the head of the list $N(x)$ resp. $L(x)$. Then the head shifts to the next
element of the (circular) list. When a node receives an id through a
$forward-from-successor$ resp. $forward-from-predecessor$ message, it stores it at
the head of its list. That way we ensure that once a node is forwarded it will
not be delayed by other nodes being forwarded on its way to the root or the
head of the heap. When a node is inserted into a list, the $insert$ operation
is used. The $insert(<list>,<node>,<place>)$ operation works as follows. It
checks whether $<node>$ is already in $<list>$ and if not, it is inserted at
$<place>$, where $<place>$ can be either head or tail (by head here the head
of the list is meant, not the head of a heap as defined above).

To avoid accumulation of unsent ids in the lists (which would have an effect
on the time and message complexity) maintained by the nodes, the following
rules are used. When $x$ has no predecessor that it can send a
\emph{forward-from-successor} message to, although it has neighbors greater than itself (so $x$ is not a head), it changes its status to $inactive$, and then informs its successor through a \emph{deactivate} message in order for $s(x)$ not to send its \emph{forward-from-successor} to $x$, until $x$ has a
predecessor (in that case an \emph{active} message is sent to $s(x)$) to which
it can forward the message. $s(x)$ then changes its status to $inactive$ and
forwards the \emph{deactivate} message to its successor $s(s(x))$, and so on.
In that way no messages that are forwarded to $x$ accumulate at $N(x)$ before
being forwarded again and we ensure that once a node is forwarded, it will not
be delayed by other nodes being forwarded. When $x$ obtains a predecessor, it
will change its status to $active$ and inform through a message of type
\emph{activate} $s(x)$ about that and the information flow can start again.

In order to repair faulty configurations, where a node is thought to be a
successor of more than one node, we introduce the following rule. If a node receives messages sent
by a node that is not its predecessor although the sending node should be the
predecessor, then a node will send a \emph{delete-successor} message,
correcting the wrong $s(x)$ link.

%The actions are presented in pseudo code in Algorithms ~\ref{pseudocode:actions}, ~\ref{pseudocode:receive actions},~\ref{pseudocode:receive actions2} and ~\ref{pseudocode:receive actions3} 

\subsection{Pseudo code}\label{app:alg}
In this last section we will present the pseudo code for the described and analyzed algorithm on the next page. The pseudo code starts with the periodic actions and then shows the receive actions, in which every incoming message is handled according to the specific message type. 
\newline

\begin{algorithm*}[ht]
\tiny
\caption{\textsc{Actions of node x at each round}}
\label{pseudocode:actions}
\begin{algorithmic}

\State \textbf{forwardtopred: true}$\rightarrow $
\If{$status(x) \neq inactive \wedge p(x) \neq null $}                   \Comment{$x$ is not a head}
\State send message(id(x), N(x).head, forward-from-successor) to p(x)         \Comment{forward node to predecessor}
\State N(x).head:=(N(x).head).next                              \Comment{shift head to next element in circular list}
\EndIf
\State

\State \textbf{checkifhead: true}$\rightarrow $
\If{$p(x)=null \vee p(x)<x$}                                    \Comment{$x$ is a head or $p(x)$ is invalid}
\State $p(x):=min\{v \in N(x): v>x\}$
\If{$p(x) \neq null$}
\State send message(id(x),pred-request) to p(x)
\State status(x):=inactive
\Else                               \Comment{$x$ is a head, scan a node}
\State send message(id(x),scan) to N(x).head
\State insert(L(x),N(x).head,tail)              \Comment{a copy of $N(x).head$ is inserted at the end of $L(x)$}
\State N(x).head:=(N(x).head).next
\EndIf
\Else
\State send message(id(x),pred-request) to p(x)
\EndIf

\State

\State \textbf{forwardtosuc:  true}$\rightarrow $
\If{$s(x) \neq null$}
\If {$s(x)<x$}                                          \Comment{test if $s(x)$ is valid}
\State send message(id(x), L(x).head, forward-from-predecessor) to s(x)                          \Comment{forward node to successor}
\State L(x).head:=(L(x).head).next
\Else
%\State insert(N(x),s(x),tail)
\State s(x)=null
\EndIf
\EndIf
\State

\State \textbf{forwardmax: true}$\rightarrow $
\If{$ S(x) \neq null$}
\State $maxN:=\max\{u:u\in N(x)\}$
\State $N(x):=N(x)\cup S(x)$
\State $maxS:=\max\{u:u\in S(x)\}$
\If{$maxS>maxN\wedge p(x)\neq null$}            \Comment{forward largest node}
\State send message(id(x),maxS , forward-head) to p(x)
\State $S:=S\setminus \{maxS\}$
\State $maxN=maxS$
%\State hf(x)=null
\EndIf
\For{all $u \in S(x)$}                      \Comment{send the maximum to the nodes of $S(x)$}
\State  send message(id(x),maxN,scanack) to u
\State delete(S(x),u)
\EndFor
\EndIf
\State

%\end{algorithmic}
%\end{algorithm*}
%
%\begin{algorithm*}
%\tiny
%\caption{\textsc{Action of node x upon receiving a message (continued)}}
%\label{pseudocode:receive actions3}
%\begin{algorithmic}

\State \textbf{process: message $m \in B(x) \rightarrow$}
\State

\If{$m.type=forward-head$}                       \Comment{insert the head forwarded from $s(x)$ to $N(x),S(x)$}
\If{$m.id=s(x)$}
\If{$m.id \not\in N(x)$}
\State insert(S(x),m.id)
\EndIf
\State insert(N(x),m.id,tail)
%\State insert(L(x),m.id,tail)
\EndIf
\EndIf
\State

\If{$m.type=scan$}                          \Comment{$x$ has been scanned by a head $m.id$}
\State insert(S(x),m.id)
\EndIf
\State

\If{$m.type=scanack$}
\If{$m.id \not\in N(x)$}
\State insert(S(x),m.id)
\EndIf
%\State insert(N(x), m.id, tail)
\EndIf
\State

\If{$m.type=delete-successor$}
\If{$m.id = s(x)$}
\State $s(x)=null$
\EndIf
%\State insert(N(x), m.id, tail)
\EndIf

\end{algorithmic}
\end{algorithm*}
\begin{algorithm*}[ht]
\tiny
%\caption{\textsc{Action of node x upon receiving a message (continued)}}
%\label{pseudocode:receive actions3}
\begin{algorithmic}

\If{$m.type=pred-request$}
\If{$m.id<id(x)$}
\If{$s(x) \neq null$}                               \Comment{renew successor if necessary, and rearrange old successor}
\State grandson:=$\min\{m.id,s(x)\}$
\State s(x):=$\max\{m.id,s(x)\}$
\State send message(id(x),pred-accept)  to s(x)
\State send message(id(x),s(x),new-predecessor) to grandson
\Else
\State s(x):=$m.id$
\State send message(id(x),pred-accept)  to s(x)
\EndIf
\EndIf
\EndIf
\State

\If{$m.type=new-predecessor$}                                            \Comment{renew predecessor}
\If{$m.id=p(x)$}
\If{$m.id2>x \wedge m.id2<p(x)$}
\State p(x)=m.id2
\State send message(id(x),pred-request) to p(x)
\State status(x)=inactive
\If{$s(x)\neq null $}
\State send message(id(x),deactivate) to s(x)
\EndIf
\EndIf
\EndIf
\EndIf
\State

\If{$m.type=pred-accept$}                            \Comment{the predecessor has accepted $x$ as its successor}
\If{$m.id=p(x)$}
\State status(x)=active
\If{$s(x)\neq null $}
\State send message(id(x),activate) to s(x)
\EndIf

\Else
\State send message(id(x),delete-successor) to m.id
\EndIf
\EndIf
\State

\If{$m.type=deactivate$}
\If{$m.id=p(x)$}
\State status(x):=inactive
\If{$s(x)\neq null $}
\State send message(id(x),deactivate) to s(x)               \Comment{forward the deactivation message to successor}
\EndIf
\Else
\State send message(id(x),delete-successor) to m.id
\EndIf
\EndIf
\State
\newpage
\If{$m.type=activate$}
\If{$m.id=p(x)$}
\State status(x):=active
\If{$s(x)\neq null $}
\State send message(id(x),activate) to s(x)                 \Comment{forward the activation message to successor}
\EndIf
\Else
\State send message(id(x),delete-successor) to m.id
\EndIf
\EndIf
\State

\If{$m.type=forward-from-successor$}                  \Comment{insert the node forwarded from $s(x)$ to $N(x)$}
\If{$m.id=s(x)$}
\State insert(N(x), m.id2, head)
\EndIf
\EndIf
\State

\If{$m.type=forward-from-predecessor$}                    \Comment{insert the node forwarded from $p(x)$ to $N(x),L(x)$}
\If{$m.id=p(x)$}
\State insert(N(x), m.id2, tail)
\State insert(L(x), m.id2, head)
\Else
\State send message(id(x),delete-successor) to m.id
\EndIf
\EndIf
\State

\end{algorithmic}
\end{algorithm*}

\section{Correctness}

In this section we show the correctness of our approach for the self-stabilizing clique.

At first we show some basic lemmas. We then show that in linear time all nodes belong to the same heap. Then we show that the head of this heap (node with the maximal id) is connected with every node and vice versa after an additional time of $\mathcal O(n)$. From this state it takes $\mathcal O(n)$ more time until every node is connected to every other node and the clique is formed. We give a formal definition of the legal state.

\begin{definition}
Let $G$ be a network with node set $V$ and $max=\max \left\{v\in V\right\}$  be the node with the maximum id. Then G is in a {\em legal state} iff $\forall v \in V: N(v)=V-\left\{v\right\}$ and $\forall v\in V-\left\{max\right\}: p(v)>v$ and $\forall v\in V-\left\{max\right\}: s(p(v))=v$.
\end{definition}

Note that the legal state contains the clique and also a sorted list over the nodes. In this section we will prove the following theorem.

\begin{theorem}\label{theorem:main}
After $\mathcal O(n)$ rounds the network stabilizes to a legal state.
\end{theorem}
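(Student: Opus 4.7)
My plan is to decompose the stabilization process into three conceptually ordered (but potentially overlapping) phases, bound each phase by $\mathcal O(n)$ rounds, and use the structural invariants built up along the way to conclude the theorem. The phases are: (i) all nodes eventually end up in one heap whose head is the global maximum node $max$; (ii) this single heap is fully linearized into a sorted list where $s(p(v)) = v$ for all $v \ne max$; (iii) along this sorted list, the root $max$ collects every id via the \emph{forward-head} propagation and then streams all $n-1$ ids down to every node via the \emph{forward-from-predecessor} round-robin, yielding $N(v) = V \setminus \{v\}$.

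For phase (i), I would first establish a few book-keeping invariants: after a node sets $p(x) := \min\{v \in N(x) : v > x\}$ in the \textbf{checkifhead} action, its predecessor pointer only changes to something strictly smaller (via \textbf{new-predecessor}) or disappears entirely, and by the total order of ids such a descending chain of changes can happen only finitely often per node. This lets me argue that the predecessor forest (the "heaps") can only become finer or merge over time. Merging is driven by the heads: each head repeatedly scans a new neighbor, and each scanned node replies through \emph{scanack} with the maximum of its own neighborhood; simultaneously, local maxima of each subtree climb toward the head via \emph{forward-head}. The key claim here is that after $\mathcal O(n)$ rounds every head has either ceased to be a head (because it learned of a larger id and chose a predecessor) or is the global $max$. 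A natural potential is the sum over heads $h$ of $n - |heap(h)|$, together with the number of distinct heads; the scanning and forwarding combined must decrease one of these within $\mathcal O(n)$ rounds because weak connectivity of the initial graph guarantees that at least one head can always scan across heap boundaries.

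For phase (ii) I would exploit the definition of "linearized w.r.t.\ $u$" and induct on $u$ from $max$ downward. Once a node $u$ has set its predecessor to the correct next-larger node already in the sorted suffix, the \emph{pred-request}/\emph{pred-accept}/\emph{new-predecessor} interaction at that parent will, within a constant number of rounds, either promote $u$ to be the successor or push $u$ to the correct smaller predecessor using the \emph{new-predecessor} mechanism. Summing the constant work per node gives a linearization time of $\mathcal O(n)$. For phase (iii), once the list is sorted, I would argue that each round the root pushes one id down, and because of the \emph{insert-at-head} convention for \emph{forward-from-predecessor} combined with the \emph{deactivate}/\emph{activate} scheme, a freshly forwarded id is never overtaken or indefinitely blocked; it reaches the tail of the list in $\mathcal O(n)$ rounds. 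Likewise the root itself accumulates every id in $\mathcal O(n)$ rounds via \emph{forward-head}. Running the two pipelines concurrently still finishes in $\mathcal O(n)$ rounds, which gives the theorem.

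The hardest part, I expect, is phase (i): a clean potential argument for heap merging in the presence of an arbitrary weakly connected (directed) initial graph. The subtlety is that a head may repeatedly scan neighbors that lie in its own heap, providing no useful progress, and that reconfigurations during phases (i) and (ii) happen simultaneously, so one must be careful that changes to $p(\cdot)$ and the eviction of stale successors through \emph{delete-successor} do not undo previous progress. I would handle this by defining a lexicographic potential $(H, \sum_h (n - |heap(h)|))$ where $H$ is the number of heaps, and showing by case analysis on the possible actions in a round that either $H$ strictly decreases within $\mathcal O(n)$ rounds (because a head eventually scans a neighbor outside its own heap, which is guaranteed since the underlying graph is weakly connected and all in-heap ids it already knows are strictly larger, forcing new scans) or the second coordinate makes comparable progress. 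Once that is set up, chaining the three phase bounds delivers the $\mathcal O(n)$-round stabilization.
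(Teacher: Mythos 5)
There is a genuine gap, and it sits exactly where you predicted: phase (i). Your lexicographic potential $(H,\sum_h (n-|heap(h)|))$ combined with the claim that $H$ ``strictly decreases within $\mathcal O(n)$ rounds'' only yields a bound of $\mathcal O(n)$ rounds \emph{per merge}; with up to $n$ initial heaps this gives $\mathcal O(n^2)$ rounds overall, not $\mathcal O(n)$. Moreover, the step that is supposed to force progress -- ``weak connectivity guarantees that at least one head can always scan across heap boundaries'' -- is not justified by the protocol: a head only scans entries of its own circular list $N(x)$, and the cross-heap edges of $E^0$ may be located at nodes deep inside the heap. The head learns such an id only after it has been forwarded upward by the round-robin \emph{forward-from-successor} mechanism, which itself takes up to $\Theta(n)$ rounds and is perturbed every time heaps merge (paths and linearization times grow). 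So the real difficulty is to show that all this happens \emph{in parallel} for all heaps and is robust under concurrent merges, which your potential does not capture.

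The paper resolves this with two different potentials, and that is the content you are missing. First it defines, for every initial inter-heap edge $(u,v)\in E^0$, a delivery time $\phi^t(u,v)=P^t(u)+ID^t(u,v)+LT^t(u)$ (path length to the head, round-robin backlog, linearization time) and proves the invariant $\Phi^t(u,v)\le 2|H_i|+n-t\le 3n-t$: the potential drops by one each round, and when $H_i$ merges with $H_k$ the increase (at most $2|H_k|$) is absorbed because the bound $2|H_i|+n-t$ grows with the new heap size. Hence after $\mathcal O(n)$ rounds \emph{every} initial cross-heap edge has led to a scan, making all heaps s-connected simultaneously. Second, on these s-edges it defines $\omega(x,y)=2\,ord(x)+2\,ord(y)+K(x,y)\le 4n+1$ and shows that the maximum of $\omega$ over an s-connectivity set strictly decreases \emph{every single round} (or the heaps involved merge), so all heaps collapse into one within $4n+1$ further rounds; merges are charged to a globally decreasing quantity rather than counted one at a time. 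Your phases (ii) and (iii) essentially match the paper's Phase 3 (linearization via the inductive lemma, then pipelined collection and distribution of ids using the non-blocking property of \emph{forward-from-successor}/\emph{deactivate}), and your validity bookkeeping matches Phase 0, but without the two-potential argument for heap merging the $\mathcal O(n)$ claim does not follow.
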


\subsection{Phase 0: Recovery to a valid state}

In this phase we show that the network can recover if the internal variables $p(x)$ and $s(x)$ are undefined or set to invalid values, e.g $p(x)<x$. We therefore define a state as valid state, if the nodes in a connected component given by $p(x)$ links form a tree and the successor's predecessor has to be the node itself. 

\begin{definition}
We say that the network  $G$ is in a {\em valid} state if $p(x)>x$ and
$s(x)<x$ for all $x\in V$ whenever $p(x)$ and $s(x)$ are defined and if
$y=s(x)$, then $x=p(y)$.
\end{definition}

\begin{theorem}\label{theorem:valid}
It takes at most 2 rounds until the network is in a valid state.
\end{theorem}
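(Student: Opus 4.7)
The plan is to verify separately each of the three requirements in the definition of a valid state: (i) $p(x) > x$ whenever $p(x)$ is defined, (ii) $s(x) < x$ whenever $s(x)$ is defined, and (iii) $y = s(x) \Rightarrow x = p(y)$. Requirements (i) and (ii) should follow from the periodic actions already in round 1, while (iii) requires the additional message exchanges of round 2.

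For (i), I would argue by inspecting \textbf{checkifhead}: in round 1 it fires on every node whose $p(x)$ is either null or at most $x$, and overwrites $p(x)$ with $\min\{v \in N(x) : v > x\}$, which is either null or strictly greater than $x$. Symmetrically, for (ii) the action \textbf{forwardtosuc} resets $s(x) := null$ whenever $s(x) \geq x$. To see that both invariants then persist through round 2, one checks the receive actions: \emph{new-predecessor} only assigns $p(x) := m.id2$ under the guard $m.id2 > x$, and \emph{pred-request} only assigns $s(x) := \max(m.id, s(x))$ under the guard $m.id < id(x)$. All other actions leave $p(x)$ and $s(x)$ untouched, so invariants (i) and (ii) are preserved.

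For the mutual condition (iii) I would proceed as follows. In round 1, every $x$ with a valid $p(x)$ sends a \emph{pred-request} to $p(x)$, and every $x$ with a valid $s(x)$ sends a \emph{forward-from-predecessor} to $s(x)$. In round 2 the receiver $y$ of a pred-request from $x < y$ sets $s(y) := \max(m.id, s(y))$, replies with \emph{pred-accept} to the new successor and with \emph{new-predecessor} to the displaced sibling; the receiver $y$ of a forward-from-predecessor from a sender $x \neq p(y)$ replies with a \emph{delete-successor} to $x$. I would then case-split on any pair with $s(x) = y$ at the end of round 2 and show, using (i) and (ii) together with the message handling above, that the only configuration consistent with these exchanges is $p(y) = x$.

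The main obstacle I anticipate is precisely this case analysis for (iii): an initial inconsistency of the form $s(x) = y$ with $p(y) \neq x$ is detected only once $y$ processes the \emph{forward-from-predecessor} from $x$, and the reset of $s(x)$ to null rides on the \emph{delete-successor} reply. The crux is therefore to account carefully for the synchronous interleaving of pred-request/pred-accept, new-predecessor, and delete-successor, and in particular to argue that at the end of round 2 every surviving $s(x)$ link has a matching $p$-link on the other side. Separating the \emph{heads} (where $p(x) = null$ after round 1, making (iii) vacuous on the $s$-side of the bond) from interior nodes, and then following each possible surviving $s$-link through its round-2 reset or confirmation, should make the bookkeeping manageable.
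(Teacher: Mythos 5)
Your decomposition and the mechanisms you invoke are exactly those of the paper's own (very short) proof: the periodic actions \textbf{checkifhead} and \textbf{forwardtosuc} repair $p(x)>x$ and $s(x)<x$ in the first round, the guards of \emph{new-predecessor} and \emph{pred-request} preserve these two invariants afterwards, and the mutual condition is meant to be enforced by the \emph{pred-request}/\emph{pred-accept}/\emph{new-predecessor} handling together with \emph{delete-successor} replies to messages arriving from a non-predecessor. Parts (i) and (ii) of your argument are complete and correct, and match the paper.

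The gap is in part (iii), which you yourself flag as the crux but do not carry out, and the target you set there cannot be reached as stated. Take $s(x)=y$ with $y<x$ and $p(y)=z\neq x$, $z>y$. In round 1, $x$ sends a \emph{forward-from-predecessor} to $y$ and $y$ sends a \emph{pred-request} to $z$; only in round 2 does $y$ process $x$'s message, notice $x\neq p(y)$, and reply with \emph{delete-successor}; that reply sits in $x$'s buffer at the end of round 2 and is executed only in round 3, when $s(x)$ is finally set to null. Nothing in rounds 1--2 forces $p(y)$ to become $x$ (\textbf{checkifhead} rewrites $p(y)$ only if it is null or smaller than $y$, and here $p(y)=z>y$ is untouched), and $x$'s own \textbf{forwardtosuc} keeps the link because $y<x$ looks locally valid. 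So under the paper's timing model (messages generated in round $i$ are processed by receive actions in round $i+1$) a stale $s$-link can survive to the end of round 2, and your planned conclusion that ``every surviving $s(x)$ link has a matching $p$-link at the end of round 2'' is false; the mutual condition is only guaranteed one round later, via exactly the delete-successor round trip you describe. To close the argument you should either settle for ``after $O(1)$ rounds'' (three rounds suffice for this mechanism, and that is all the later analysis needs), or make explicit which weaker invariant you claim after two rounds. Be aware that the paper's own proof is equally informal here --- it simply asserts that the corrections take effect ``at the next round'' --- so what is missing is not a new idea but the careful bookkeeping you promise, together with an honest statement of the round at which condition (iii) actually begins to hold.
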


\begin{proof}
The network may be at an invalid state at the first round we consider.
That means that the variables $p(x),s(x)$ can have invalid values.
So $x$ could have set a node $u$ as its predecessor (i.e. $u=p(x)$) with $u<x$, which is not valid according to our protocol. Despite the presence of this invalid state, our protocol can recover from it very fast, so that the actual stabilization procedure can start. So if a variable is set faulty, that is $p(x)<x$ or $s(x)>x$, it will be set to $null$ after the first round, once the periodic actions will have been executed, as it is tested in the actions $checkifhead$ and $forwardtosuc$ if $p(x)<x$ and if $s(x)>x$.
Once each node has computed a valid predecessor, it will request it to accept it as a successor. So after the next round each node will (if possible) also have a valid successor, %and if a node is accepted as successor, its status is set to active, through a $pred-accept$ message, else it is set to inactive through a $new-predecessor$ message. So each node's status is set to active, only and only if there is another node, having this node as successor.
Moreover if $x$ notices that it is contacted from multiple nodes that think that $x$ has stored them as successors, $x$ contacts all these nodes but one (its true successor) through \emph{delete-successor} messages so at next round $x$ has no multiple successors. In other words it always holds that if $y=s(x)$, then $x=p(y)$.
\end{proof}

For our further analysis we assume that the initial state is valid, since we do not take into account the first
2 rounds it takes to reach a valid state. So we consider the first round in which we have a valid state as the round $t=0$. Note that due to the periodic actions the network stays in a valid state in every round afterwards.

\subsection{Phase 1: Connect all heaps by s-edges}

In this phase we show that starting from a valid state all existing heaps will eventually be connected by s-edges (defined below), so that they will merge afterwards.

First we give following definitions.

\begin{definition}
We distinguish between two different kinds of edges that can exist at any time in our network, the edges in the set $E$ and the ones in the set $E_s$.
We say that $(x,y)$ is in  $E$, if $y\in N(x)$ and $(x,y)$ in $E_s$ if $y\in S(x)$, resulting from a scan from $y$. We will call the latter ones \emph{s-edges} and denote them by $(x,y)_s$.
\end{definition}

\begin{definition}
In the directed graph we define an \emph{undirected path} as a sequence of edges $(v_0,v_1),(v_1,v_2),$ \newline $\cdots , (v_{k-1},v_k$), such that $\forall i \in \left\{1,\cdots ,k\right\}:(v_i,v_{i-1})\in E \vee (v_{i-1},v_i)\in E$. 
\end{definition}

\begin{definition}
We say that two heaps $H_1$ and $H_2$ are \emph{s-connected} if there exists at least one undirected path from one node in $H_1$ to one node in $H_2$ and this path consists of either s-edges or edges having both nodes in the same heap.
\end{definition}

\begin{definition}
We say that a subset of s-edges $E'_s \subseteq E_s$ is a \emph{s-connectivity set} at round $t$ if all heaps in the graph are s-connected to each other through edges in $E'_s$ at round $t$.
\end{definition}

In the first phase we will show that after $\mathcal O(n)$ rounds all heaps have been connected by s-edges. Let $E^0$ be the set of edges $(u,v)\in E$ at time $t=0$. We then show that all these edges are scanned in $\mathcal O(n)$ rounds, giving us the connections via s-edges.

\begin{theorem}\label{theorem:heap-connectivity}
After $\mathcal O(n)$ rounds the heaps $H_i$ and $H_j$ connected by $(u,v)\in E^0$ have either merged or been connected by s-edges .
\end{theorem}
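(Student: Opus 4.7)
The plan is to track the id of $v$ as it propagates upward through $H_i$ (WLOG $v\in N(u)$ at time $0$, with $u\in H_i$ and $v\in H_j$; the symmetric direction is handled identically) and to show that within $\mathcal O(n)$ rounds either $H_i$ and $H_j$ have merged or $v$ has reached the current head of $u$'s heap, where it is scanned via the \emph{checkifhead} action, producing an s-edge from $v$ to that head and hence s-connecting $H_i$ and $H_j$.

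The argument splits into a \emph{startup} phase at $u$ and a \emph{pipelined} phase up the heap. For the startup, I would observe that \emph{forwardtopred} traverses $N(u)$ in round-robin order, and $v$ is already in $N(u)$ at time $0$. Insertions at the head of $N(u)$ can occur only for ids not already present in $N(u)$ (by the semantics of the \emph{insert} operation), so the total number of head-insertions into $N(u)$ over the whole computation is at most $n$, each delaying $v$ by at most one slot. Combined with $|N(u)|=\mathcal O(n)$, $v$ becomes the head of $N(u)$ and is forwarded to $p(u)$ via \emph{forward-from-successor} within $\mathcal O(n)$ rounds. For the pipeline, once $v$ arrives at $p(u)$ it is inserted at the head of $N(p(u))$; by the insert-at-head semantics and the order of actions within a round, the next \emph{forwardtopred} at $p(u)$ forwards $v$ one level further. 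Iterating, $v$ advances one heap level per round, and since the heap depth is at most $n$ this phase contributes $\mathcal O(n)$ additional rounds. Once $v$ sits at the head of the head-node's $N$-list, \emph{checkifhead} scans it within $\mathcal O(n)$ more rounds by the same counting argument, producing the required s-edge.

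The main obstacle is that the heap is not static during these $\mathcal O(n)$ rounds. Three disruptions can occur: (i) a node on the upward path may become \emph{inactive} upon losing its predecessor, stalling \emph{forward-from-successor}; (ii) a $p$- or $s$-pointer may be rerouted when a better predecessor is discovered in $N(\cdot)$; and (iii) heaps may merge outright. I would argue that (a) by Theorem~\ref{theorem:valid} the valid-state invariant persists, and the \emph{pred-request}/\emph{pred-accept}/\emph{activate} handshakes reactivate any temporarily inactive node within $\mathcal O(1)$ rounds; since each id is inserted at most once at the head of any $N(\cdot)$, the total reactivation delay accumulated during $v$'s journey is absorbed into the $\mathcal O(n)$ bound. (b) Any rerouting only places $v$ into a heap whose head has id at least as large as before, so the pipeline still aims at the current head of $u$'s heap and the depth argument remains valid. (c) If $H_i$ and $H_j$ merge at any moment the theorem holds immediately; otherwise the scan at the head of $u$'s heap produces an s-edge from $v\in H_j$ to a node in $H_i$, completing the proof.
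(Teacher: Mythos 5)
Your high-level plan---track $v$'s id as it is forwarded up $H_i$ until the head of $H_i$ scans it, within $\mathcal O(n)$ rounds---is exactly the intuition behind the paper's proof, but the two quantitative claims you use to make the plan work do not hold, and this is precisely where the paper needs its key device. First, the claim that an inactive node is reactivated within $\mathcal O(1)$ rounds is false in general: a node becomes inactive whenever it changes its predecessor (via \emph{new-predecessor}) and stays inactive until a \emph{pred-accept} arrives, and during linearization it can be redirected many times; moreover \emph{deactivate} messages cascade down the successor chain and the corresponding \emph{activate} wave returns only one hop per round, so a node currently holding $v$ at the head of its $N$-list can stall for up to $\Theta(n)$ rounds in total. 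Your remark that the accumulated delay is ``absorbed'' because each id is inserted at the head of any $N(\cdot)$ at most once bounds a different quantity (round-robin displacement), not the stall time due to inactivity. The paper closes exactly this gap with the potential of Lemma~\ref{lem:phi}, $\phi^t(u,v)=P^t(u)+ID^t(u,v)+LT^t(u)$: every round in which the carrier of $v$ is inactive is charged to the linearization time $LT$, which is bounded by the heap size via Lemma~\ref{lem:stab-1} and Corollary~\ref{lem:stabilization}.

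Second, your handling of merges and re-parenting (point (b)) does not control the quantities that matter. When $H_i$ merges with some $H_k$, both the remaining path from $v$'s current holder to the head and the linearization (hence stall) time can each grow by $|H_k|$, and this can happen repeatedly, so ``the depth argument remains valid'' is not a proof. The paper's bound $\Phi^t(u,v)\le 2|H_i|+n-t$ is engineered so that the increase of at most $2|H_k|$ caused by a merge is absorbed by the growth of the $2|H_i|$ term, which is what yields the uniform $3n-t$ bound and hence the $\mathcal O(n)$ conclusion. Two smaller omissions: if $v$ is already present in $N(w)$ at an intermediate node $w$, the \emph{insert} does not move it to the head, so your one-level-per-round pipeline breaks there and you need the round-robin counting again (the paper's $ID$ term and the minimization in $\Phi$ cover this); and you never treat the case that $u$ or a later carrier is itself a head, in which case $v$ is scanned rather than forwarded---the paper's $ID^t(u,v)$ counts ``forwards or scans'' for exactly this reason. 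To repair your argument you essentially need the paper's potential function, or an equivalent charging scheme that simultaneously bounds stall time by linearization progress and absorbs merge-induced increases.
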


To prove the theorem we firstly show some basic lemmas needed in the analysis.

\begin{lemma}\label{lem:stab-1}
Let $u_1,\cdots u_{|H|}$ be the elements in a heap $H$ in descending order. Then it takes at most $i$ rounds till $H$ is linearized w.r.t $u_i$.
\end{lemma}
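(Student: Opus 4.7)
The plan is to prove the lemma by induction on $i$. In the base case $i=1$, the condition $s(p(v))=v$ quantifies over non-head $v\in H-\{h\}$ with $v\ge u_1$; since $u_1=h$ is itself the head, no such $v$ exists and the condition holds vacuously. The remaining heap-property condition $p(v)>v$ is already guaranteed in every round once the state is valid, so $i=1$ is handled by the starting configuration alone.

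For the inductive step, I would assume the heap is linearized w.r.t.\ $u_{i-1}$ at the end of round $i-1$, so the chain $u_1,u_2,\ldots,u_{i-1}$ is in place, i.e.\ $p(u_j)=u_{j-1}$ and $s(u_{j-1})=u_j$ for $2\le j\le i-1$, and show that one additional round suffices to extend the chain by $u_i$. The key mechanism is the bounce-back via \emph{new-predecessor} messages: whenever $u_i$ sends a \emph{pred-request} to a predecessor $u_k$ with $k<i-1$, the node $u_k$ already has $s(u_k)=u_{k+1}>u_i$, so $u_k$ keeps $u_{k+1}$ as its successor and responds with a \emph{new-predecessor} carrying $u_{k+1}$'s id; upon receipt $u_i$ updates $p(u_i):=u_{k+1}$ and immediately issues a fresh \emph{pred-request}. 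Thus $u_i$'s predecessor climbs one level up the chain per exchange until it reaches $u_{i-1}$, and then $u_{i-1}$, upon processing the \emph{pred-request}s from the nodes $v$ with $p(v)=u_{i-1}$ (all of which satisfy $v\le u_i$), selects the maximum and sets $s(u_{i-1})=u_i$. The matching equality $p(u_i)=u_{i-1}$ then follows from the validity invariant $s(x)=y\Rightarrow p(y)=x$ of Theorem~\ref{theorem:valid}.

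The main obstacle is the timing: the two processes at play---the top-down growth of the chain from the inductive hypothesis, and the upward migration of $p(u_i)$---run concurrently, and because messages sent in a round are only delivered at its end, each bounce-back step spans more than one clock tick. A naive step-by-step accounting would give a bound worse than $i$. I therefore expect to need a strengthened inductive invariant that tracks, as a function of the round, both the index up to which the chain is linearized and the index of the current $p(u_i)$, exploiting the fact that the periodic \emph{checkifhead} action causes $u_i$ to re-send its \emph{pred-request} every round so that its ascent of the chain is coupled with rather than serialized after the chain's growth. A secondary subtlety is the arbitrary order in which a node processes multiple \emph{pred-request}s in its buffer $B(x)$; regardless of the order, the final $s(x)$ equals the maximum of the requester ids and the prior $s(x)$, and every displaced requester receives a \emph{new-predecessor} with id at least as large as its own, which is all the induction needs.
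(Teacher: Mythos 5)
Your overall strategy is the same as the paper's: induct so that the linearized prefix grows by one node per round, with the \emph{new-predecessor} bounce as the mechanism that brings $p(u_i)$ down the chain until it reaches $u_{i-1}$. However, your write-up stops exactly where the proof still has to be done. The hypothesis ``linearized w.r.t.\ $u_{i-1}$'' by itself says nothing about where $p(u_i)$ currently points; it may still point at some $u_k$ with $k<i-1$, and then one additional round does not suffice, since each \emph{pred-request}/\emph{new-predecessor} exchange moves the pointer down only a bounded distance. You recognize this and announce that a ``strengthened inductive invariant'' tracking both the settled prefix and the position of $p(u_i)$ will be needed, but you neither formulate it nor prove it, so the inductive step is not closed and the proposal remains a plan rather than a proof. (For comparison, the paper's own proof takes the identical route and at this very point simply asserts that $u_{i+1}$ ``has to be connected to $u_i$ by a $p(x)$ link'', which does not follow from linearization w.r.t.\ $u_i$ alone; so your instinct that an extra invariant is required is correct, and supplying it would in fact make the argument tighter than the published one.)

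The missing piece can be made concrete as a coupled invariant proved by a single induction on the round number $t$, of the form: (a) after round $t$ the heap is linearized w.r.t.\ the $t$-th largest node, and (b) after round $t$ every smaller node $u_k$ has $p(u_k)$ already below the settled prefix, i.e.\ $p(u_k)\notin\{u_1,\dots,u_{t-1}\}$. Part (b) applied to $k=t+1$ forces $p(u_{t+1})=u_t$, so in the next round $u_{t+1}$ is the largest node requesting $u_t$ and becomes its successor, restoring (a); and any smaller $u_k$ whose request reaches a prefix node $u_j$ is necessarily displaced (that node's final successor is the next chain element, which exceeds $u_k$) and is handed, via \emph{new-predecessor}, an id strictly between $u_k$ and its old predecessor, restoring (b); monotone decrease of $p(\cdot)$ handles the nodes that were already below the prefix. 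Your secondary worry about clock ticks is also legitimate and should not be waved away: under the paper's delivery semantics (a message sent in round $r$ is processed in round $r+1$), each request--answer exchange spans two rounds, so the honest accounting gives a bound of order $2i$ rather than literally $i$; the paper's proof silently treats request, acceptance and bounce as taking effect within one round. This affects only the constant and none of the $\mathcal O(n)$ results built on the lemma, but if you keep the literal bound ``$i$ rounds'' you must either adopt that one-round-per-exchange convention explicitly or restate the lemma with the larger constant.
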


\begin{proof}
We prove the lemma by induction on the number of rounds $i$. Note that all nodes are connected by the $p(x)$ links only to nodes with larger ids.

Induction base ($i=0$): The head of the heap is the node with the maximal id therefore trivially, $\forall v\in H-\left\{h\right\}: p(v)>v$ and $\forall v\in H-\left\{h\right\}$ with $v\geq h: s(p(v))=v$.

Induction step ($i\rightarrow$ i+1): By induction the heap is linearized w.r.t. $u_i$ after $i$ rounds, thus $u_{i+1}$ has to be connected to $u_i$ by a $p(x)$ link. In the $i+1$th round $u_i$ sends $new-predecessor$ messages to all other nodes with $p(x)=u_i$, such that $s(u_i)=u_{i+1}$ and $u_{i+1}$ becomes the only node with $p(x)=u_i$. Then $\forall v\in H-\left\{h\right\}: p(v)>v$ and $\forall v\in H-\left\{h\right\}$ with $v\geq u_{i+1}: s(p(v))=v$.
\end{proof}

\begin{lemma}\label{lem:merge}
Once one head learns about the existence of another head, two heaps are merged.
\end{lemma}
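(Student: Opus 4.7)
My approach is to show that as soon as one head has another head in its neighborhood, the very next execution of the \texttt{checkifhead} action at the smaller of the two heads demotes it from being a head and grafts its whole subtree onto another heap; in the symmetric case, where only the larger head knows about the smaller one, a single round of scan/\texttt{forwardmax} will bring us back to the first situation.

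First I would take up the principal case: two heads $h_1 < h_2$ with $h_2 \in N(h_1)$. Because $h_1$ is a head we have $p(h_1)=null$, so the first branch of \texttt{checkifhead} fires and sets $p(h_1) := \min\{v \in N(h_1) : v > h_1\}$. Since $h_2$ witnesses that this set is nonempty, $p(h_1)$ becomes some $z$ with $h_1 < z \le h_2$, and $h_1$ is no longer a head. The key invariant I would invoke is that in any heap $H$, every non-head node has an id strictly smaller than the head of $H$ (which follows inductively from $p(v)>v$ along the tree). Hence $\mathrm{heap}(h_1)\setminus\{h_1\}$ has ids strictly below $h_1$, so $z>h_1$ cannot lie in $\mathrm{heap}(h_1)$. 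Consequently $z$ sits in some other heap $H'$, and attaching $h_1$ to $z$ by the new $p$-link plants the whole tree rooted at $h_1$ under $H'$, which is exactly what it means for two heaps to have merged.

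Next I would deal with the symmetric case, where $h_1 \in N(h_2)$ but $h_2 \notin N(h_1)$. Now $h_2$ does nothing in its own \texttt{checkifhead} because no $v \in N(h_2)$ exceeds $h_2$, so it stays a head; but precisely because it is a head it runs the scan branch, which round-robins through $N(h_2)$. Within a bounded number of rounds $h_2$ sends a scan message to $h_1$; the receive rule for scan places $h_2$ into $S(h_1)$, and then the next \texttt{forwardmax} at $h_1$ executes $N(h_1) := N(h_1) \cup S(h_1)$, placing $h_2$ in $N(h_1)$. This reduces us to the principal case.

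The main obstacle is being careful about what ``learns about'' is meant to mean in the statement and verifying that the relevant id really reaches $N(h_1)$ before the next \texttt{checkifhead}, so that the $\min$-computation there can see it. A secondary subtlety worth flagging is that the node $z$ chosen as the new predecessor of $h_1$ need not equal $h_2$: the two heaps that actually merge are $\mathrm{heap}(h_1)$ and $\mathrm{heap}(z)$, not necessarily $\mathrm{heap}(h_1)$ and $\mathrm{heap}(h_2)$. This is enough for the lemma as stated but should be pointed out to avoid confusion when the lemma is later used to bound the rate at which heaps disappear.
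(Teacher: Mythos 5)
Your proof is correct and follows essentially the same route as the paper's: a case split on which of the two heads is smaller, where the smaller head demotes itself via \emph{checkifhead}/\emph{pred-request} as soon as a larger id sits in its neighborhood, and the \emph{scan} message (carrying the scanner's id) handles the case where only the larger head knows the smaller one. Your extra observations --- that the new predecessor $z$ need not be $h_2$, so the merging heaps are $heap(h_1)$ and $heap(z)$, and that the round-robin scanning only guarantees the merge after a bounded delay --- are details the paper's two-line case analysis leaves implicit.
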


\begin{proof}
Let $h_i$ be the head of heap $H_i$. Also, let $h_j$ be the head of heap $H_j$ scanning $h_i$. There can be two cases.

\begin{itemize}

\item $h_i<h_j$: In this case, $h_i$ will no longer be a head once $h_j$ scans it and sends it own id.

\item $h_i>h_j$: In this case, $h_j$ will no longer be a head and will send an \emph{pred-request} to $h_i$.
\end{itemize}
\end{proof}

In case of a merging of two heaps $H_i$, $H_j$, the time it takes until the new heap $H$ is linearized w.r.t. a node $u$ can increase with respect to the linearization time of $u$ in the heap before the merging.

\begin{lemma}\label{lem:stab-merge}
If two heaps $H_i$ and $H_j$ merge to one heap $H$, the linearization time of a node $u\in H_i$ (resp. $u \in H_j$) can increase by at most $|H_j|$ (resp. $|H_i|$).
\end{lemma}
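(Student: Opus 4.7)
The plan is to reduce this claim to Lemma~\ref{lem:stab-1} together with a simple rank-counting argument.

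Fix $u \in H_i$ and let $k$ denote the rank of $u$ when the elements of $H_i$ are listed in descending order of id. By Lemma~\ref{lem:stab-1} applied to $H_i$, the linearization time of $u$ inside $H_i$ is at most $k$. After $H_i$ and $H_j$ merge into a single heap $H$, let $k'$ be the rank of $u$ within $H$. Every element of $H$ strictly larger than $u$ either comes from $H_i$ (exactly $k-1$ such elements) or from $H_j$ (at most $|H_j|$), so $k' \leq k + |H_j|$. Treating the moment of the merge as round $0$ for the new heap $H$ and applying Lemma~\ref{lem:stab-1} to $H$ yields a linearization-time bound of $k' \leq k + |H_j|$ for $u$ in $H$. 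Thus the upper bound on the linearization time of $u$ exceeds the bound for $H_i$ by at most $|H_j|$, as claimed. The symmetric case $u \in H_j$ follows by swapping the roles of $H_i$ and $H_j$.

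The main obstacle is justifying the second application of Lemma~\ref{lem:stab-1}: its induction starts from the moment the heap is a rooted tree satisfying the max-heap property, so we must verify that $H$ enjoys this property at the merge instant. For this we invoke the mechanism described in the proof of Lemma~\ref{lem:merge}: the merge is realized by the smaller of the two heads, say $h_j < h_i$, acquiring a new predecessor pointer to some node $w$ in the other heap with $w > h_j$, while every other $p(\cdot)$ pointer is left untouched. Consequently, each non-head node $v$ of $H$ still satisfies $p(v) > v$, and the graph on $p$-edges is a rooted tree whose root is the larger of $h_i$ and $h_j$. Hence $H$ is indeed a valid heap at the merge moment, and Lemma~\ref{lem:stab-1} can be invoked with its round counter reset to zero. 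No other state (successor pointers, the lists $N(\cdot)$ and $L(\cdot)$, or the status flags) enters the statement of the lemma, so no further bookkeeping is required.
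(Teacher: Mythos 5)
Your proposal is correct and follows essentially the same route as the paper: both bound the linearization time of $u$ via Lemma~\ref{lem:stab-1} by the number of larger-id nodes in its heap, and observe that a merge with the other heap adds at most $|H_j|$ (resp.\ $|H_i|$) such nodes. Your extra check that the merged structure is again a valid heap (only the smaller head gains a new predecessor pointer) is a sensible elaboration of a step the paper leaves implicit, but it does not change the argument.
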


\begin{proof}
Without loss of generality let $u\in H_i$. By Lemma ~\ref{lem:stab-1} we know that the linearization time depends on the number of nodes with a larger id in the heap. The number of nodes with a larger id can increase by at most the size of the other heap $H_j$. Thus, also the linearization time can only increase by at most $|H_j|$.
\end{proof}

From Lemma ~\ref{lem:stab-1} and Lemma ~\ref{lem:stab-merge} we immediately get via an inductive argument:

\begin{corollary}\label{lem:stabilization}
For any heap $H$ of size $|H|$ in round $t$ it takes at most $|H|-t$ rounds until it forms a sorted list.
\end{corollary}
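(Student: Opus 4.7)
The plan is to combine Lemma~\ref{lem:stab-1} and Lemma~\ref{lem:stab-merge} by induction on the sequence of merges that produced the current heap $H$; equivalently, one inducts on the number of merges in $H$'s formation history, with the natural induction variable being the round $t_m$ of the last merge that created $H$ (taking $t_m=0$ if $H$ has never merged).

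In the base case $H$ has undergone no merges, so its current node set has formed a single heap since round $0$. Lemma~\ref{lem:stab-1} applied with $i=|H|$ then says that $H$ is linearized with respect to its smallest node within $|H|$ rounds of round $0$, which means $H$ is already a sorted list at round $|H|$. Reading this from any intermediate observation round $t\le|H|$ yields the bound $|H|-t$.

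For the inductive step, let the last merge occur at round $t_m$ and combine heaps $H_1,H_2$ of sizes $k_1,k_2$ with $|H|=k_1+k_2$. Immediately before the merge, the induction hypothesis applies separately to $H_1$ and $H_2$ (each has strictly fewer merges in its history), giving for each $H_i$ at most $k_i-t_m$ further rounds to finish sorting on its own. Lemma~\ref{lem:stab-merge} then bounds the extra delay caused by the merge by the size of the opposite heap: $k_2$ additional rounds for nodes originally in $H_1$, and $k_1$ for those originally in $H_2$. Adding gives at most $(k_1-t_m)+k_2 = |H|-t_m$ remaining rounds from round $t_m$, and hence at most $|H|-t$ from any later observation round $t\ge t_m$.

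The main obstacle I anticipate is clean bookkeeping of the round indices across repeated nested merges: I must ensure that iterated applications of Lemma~\ref{lem:stab-merge} do not accumulate super-additively, so that the total delay still telescopes into $|H|-t$, and that the induction hypothesis is invoked on the correct surviving subheap at the correct round. In particular, one must respect the implicit restriction $t\ge t_m$, since the corollary only speaks about the single heap $H$ and $H$ only exists as such from the round of its last formation onward.
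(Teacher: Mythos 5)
Your proof is correct and is essentially the argument the paper leaves implicit: the paper states the corollary follows ``immediately\ldots via an inductive argument'' from Lemma~\ref{lem:stab-1} and Lemma~\ref{lem:stab-merge}, and your induction over the merge history (Lemma~\ref{lem:stab-1} for the merge-free base case, Lemma~\ref{lem:stab-merge} to charge each merge the size of the absorbed heap so the bound telescopes to $|H|-t$) is exactly that argument spelled out. The only cosmetic caveats, which the paper shares, are that the bound should be read as $\max\{|H|-t,0\}$ and that simultaneous merges of several heaps are handled by the same size-accounting.
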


\begin{lemma}\label{lem:id-delay}
If a node sends an $id$ with a  \emph{forward-from-successor} message, the $id$ will not be delayed by other  \emph{forward-from-successor} messages on its way to the head.
\end{lemma}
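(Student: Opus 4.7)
The plan is to trace the journey of the forwarded id $z$ through the chain of predecessors $p_1=p(x),\ p_2=p(p(x)),\ldots$ up to the head of $heap(x)$, and show by induction on the hop count $k$ that each additional hop costs exactly one round, no matter how many other \emph{forward-from-successor} messages are circulating in the heap at the same time.

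For the base case, the \textbf{forwardtopred} action that produces the message already delivers $z$ into $B(p_1)$ at the end of the sending round. For the inductive step I would assume $z$ sits in $B(p_k)$ at the end of some round and show that it reaches $B(p_{k+1})$ by the end of the next round. When $p_k$ processes its buffer, the \emph{forward-from-successor} receive action checks that $m.id=s(p_k)=p_{k-1}$ (which holds in the valid state guaranteed by Theorem~\ref{theorem:valid}) and executes $insert(N(p_k),z,\text{head})$, placing $z$ at the very head of $N(p_k)$. In the same round $p_k$ also executes its periodic \textbf{forwardtopred}, which reads $N(p_k).\text{head}$ and dispatches it to $p_{k+1}=p(p_k)$; hence $z$ is delivered into $B(p_{k+1})$, advancing the induction.

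The core observation under the induction is that between the moment $z$ is placed at the head of $N(p_k)$ and the moment \textbf{forwardtopred} reads it, nothing can displace $z$. Inspecting the receive actions shows that an insertion at the head of $N(\cdot)$ occurs only in the \emph{forward-from-successor} handler; every other message type either inserts into $N$ at the tail (\emph{forward-from-predecessor}, \emph{forward-head}) or only touches the auxiliary structures $S$ and $L$ or the pointers $p,s,status$. Consequently the only event that could push $z$ out of head position is another \emph{forward-from-successor} arriving at $p_k$, but in a valid state $p_k$ has a unique successor $p_{k-1}$ and therefore receives at most one such message per round. So the freshly installed head $z$ is always read and forwarded before any competing \emph{forward-from-successor} can overwrite it.

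The step I expect to need the most care is the corner case in which $z$ already lies in $N(p_k)$ when the message arrives, since $insert$ is a no-op in that case and $z$ is not moved to the head. For this I would argue that the event $z\in N(p_k)$ must have been produced by an earlier insertion of $z$ into $N(p_k)$; applying the hop-by-hop induction to that earlier insertion shows that $z$ has already begun its ascent past $p_k$, so any latency now is again not attributable to other \emph{forward-from-successor} messages. Chaining the one-round-per-hop bound along the chain of predecessors up to the head of $heap(x)$ then yields the claim.
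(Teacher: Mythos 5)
There is a genuine gap: your argument silently assumes that every node $p_k$ on the path executes \textbf{forwardtopred} in every round, i.e.\ that it is always \emph{active} and always has a valid predecessor. That is exactly the situation the lemma has to worry about. During stabilization a node on the path can be inactive (it has just set a new predecessor via \emph{checkifhead} or \emph{new-predecessor} and is waiting for a \emph{pred-accept}, or it has received a \emph{deactivate}), and then the guard of \textbf{forwardtopred} fails and the id $z$ sits at the head of $N(p_k)$ for several rounds. In that window your key inference --- ``$p_k$ receives at most one \emph{forward-from-successor} per round, so $z$ is forwarded before any competitor can overwrite it'' --- breaks down: if the successor kept sending, each subsequent round would install a new id at the head of $N(p_k)$ ahead of $z$, which is precisely the delay the lemma rules out. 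The paper closes this hole with the deactivate/activate mechanism: an inactive node sends \emph{deactivate} to its successor, so no further \emph{forward-from-successor} messages arrive while $z$ is parked at the head of $N(p_k)$, and $z$ is sent immediately once the node is active again. Your proposal never invokes this mechanism, so the central case is unhandled.

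Relatedly, your conclusion ``each hop costs exactly one round'' is stronger than the lemma and is false in general: the id \emph{can} be delayed by inactivity and by ongoing linearization (this is why Lemma~\ref{lem:phi} carries the separate $LT$ term); the lemma only asserts that \emph{other forward-from-successor messages} cause no delay. Even the one-round-per-hop accounting in your inductive step depends on the buffer being processed before the periodic \textbf{forwardtopred} fires in the same round, an ordering the pseudocode does not guarantee (periodic actions are listed first). The corner case you single out ($z$ already present in $N(p_k)$ so the insert is a no-op) is a side issue; the missing idea is the interaction between the inactive status and the suppression of further \emph{forward-from-successor} traffic.
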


\begin{proof}
Once a node sends a message to its predecessor through a \emph{forward-from-successor} message, the number of rounds it takes to reach the head of its heap depends only on the path to the head and the linearization steps.
When a node $x$ receives a \emph{forward-from-successor} message, it stores the received $id$ at the head of its neighborhood list $N(x)$. So this $id$ will be forwarded immediately, if $x$ is active. If it cannot be forwarded because $x$ is inactive, the node will inform its successor about its inactive state and as a consequence no more \emph{forward-from-successor} messages will be sent to $x$. That means that no other id can take the place of the one present at the head of $N(x)$.
So, once $x$ is active again, the $id$ will be sent immediately.
\end{proof}

As a consequence of the observation of Lemma ~\ref{lem:merge} we introduce some additional notation to estimate the time it takes until any id is scanned by a head of a heap.

For any edge $(u,v)\in E^0$ with $u\in H_i$ and $v\in H_j$, where $h_i$ and $h_j$ denote the corresponding heads of the heaps, we define the following notation in a round $t$: Let $P^t(u)$ be the length of the path from $u$ to $h_i$, once $H_i$ is linearized w.r.t. $u$. Let $ID^t(u,v)$ be the number of ids $u$ forwards or scans before sending or scanning $v$ the first time. Let $LT^t(u)$ be the time it takes until the heap is linearized w.r.t. $u$ , i.e. on the path from the head $h_i$ to $u$ each node has exactly one predecessor and successor. Corollary ~\ref{lem:stabilization} shows that $LT^t(u)$ is bounded by $|H_i|$.

Let $\phi^t(u,v)= P^t(u)+ID^t(u,v)+LT^t(u)$. We call $\phi^t(u,v)
$ the \emph{delivery time} of an id $v$ because if $\phi^t(u,v)=0$, the id is scanned in round $t$ or has already been scanned by $h_i$. We then denote by $\Phi^t(u,v)=\min\left\{\phi^t(w,v):heap(u)=heap(w)\right\}$ the minimal delivery time of $v$ for any node in the same heap as $u$.

For any edge $(u,v)\in E^0$, with $u\in H_i$ and $v\in H_j$, (i.e. $u$ and $v$ are in different heaps) and $\Phi^t(u,v)=0$ the head of $H_i$ scans or has scanned $v\in H_j$ resulting in the s-edge $(v,h_i)_s$.The following holds:

\begin{lemma}\label{lem:phi}
If $(u,v)\in E^0$ is an edge between two heaps $H_i$ and $H_j$, then
$\Phi^t(u,v)\leq \max \left\{2|H_i|+n-t,0\right\} $ \newline $\leq \max \left\{3n-t,0\right\}$ for all rounds $t$.
\end{lemma}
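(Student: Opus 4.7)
The plan is induction on the round $t$. For the base case $t=0$, I bound each term of $\phi^0(u,v) = P^0(u) + ID^0(u,v) + LT^0(u)$ separately: $P^0(u) \leq |H_i|$ since any path inside $H_i$ has length at most $|H_i|$; $LT^0(u) \leq |H_i|$ by Corollary~\ref{lem:stabilization} evaluated at round $0$; and $ID^0(u,v) \leq |N(u)| \leq n$. Adding the three estimates yields $\Phi^0(u,v) \leq \phi^0(u,v) \leq 2|H_i| + n$, which matches the claimed inequality at $t=0$.

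For the inductive step I assume the bound at round $t$ and derive it at $t+1$. If $\Phi^t(u,v) = 0$ then $v$ has already been scanned by $h_i$ and there is nothing to show. Otherwise I pick a witness $w \in heap(u)$ realizing $\phi^t(w,v) = \Phi^t(u,v)$ and show that a single round drops $\phi$ for some witness $w'$ (either $w$ itself or $p(w)$) by at least one. I would split into three cases in the priority order $LT$, $ID$, $P$. If $LT^t(w) > 0$, the linearization along the path from $w$ to $h_i$ fires one more step in round $t+1$ (this is the inductive mechanism underlying Lemma~\ref{lem:stab-1}), so $LT^{t+1}(w) \leq LT^t(w) - 1$ while $P$ and $ID$ are untouched. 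If $LT^t(w)=0$ but $ID^t(w,v) > 0$, the periodic \textbf{forwardtopred} action at $w$ sends the current $N(w).head \neq v$ to $p(w)$ and advances the circular pointer, decreasing $ID^{t+1}(w,v)$ by exactly one. If $LT^t(w) = 0$ and $ID^t(w,v) = 0$, then $v$ sits at $N(w).head$ and is forwarded to $w' := p(w)$; by Lemma~\ref{lem:id-delay}, $v$ lands at $N(w').head$ without being overtaken, and since the heap is linearized from $w'$ up to $h_i$ we get $\phi^{t+1}(w',v) = (P^t(w)-1) + 0 + 0 = \phi^t(w,v) - 1$. In all three cases $\Phi^{t+1}(u,v) \leq \Phi^t(u,v) - 1 \leq 2|H_i| + n - (t+1)$, closing the induction.

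I expect the delicate part of the argument to be the bookkeeping around heap merges that may occur during a round. If $heap(u)$ merges with some third heap $H_k \neq H_j$, then Lemma~\ref{lem:stab-merge} allows $LT^t(w)$ to jump upward by as much as $|H_k|$, and $P^t(w)$ may also lengthen; I would absorb this by noting that $|H_i|$ itself grows by exactly $|H_k|$, so the right-hand side $2|H_i| + n - t$ gains $2|H_k|$ of slack, which comfortably covers the increase of $\phi$. The remaining possibility, that $heap(u)$ merges with $H_j$ directly, places us in the "merged" alternative of Theorem~\ref{theorem:heap-connectivity}, so no bound on $\Phi^t(u,v)$ is needed in that branch and the lemma is vacuous there.
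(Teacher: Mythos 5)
Your proof takes essentially the same route as the paper's: the same three-term potential $P+ID+LT$, induction over rounds with a per-round decrease of one (your split by which of $LT$, $ID$, $P$ drops mirrors the paper's inactive / active-but-not-sending-$v$ / sending-$v$ cases), and the same absorption of merge-induced increases of $P$ and $LT$ into the growth of the $2|H_i|$ term. The only cosmetic difference is that you declare the case where $heap(u)$ merges with $H_j$ vacuous (appealing to the merged alternative of Theorem~\ref{theorem:heap-connectivity}), whereas the paper handles that merge uniformly within its merge case; this does not change the substance.
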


\begin{proof}
We will show the lemma by induction on the number of rounds. For the analysis we divide each round $t\rightarrow t+1$ into two parts: in the first step $t \rightarrow t'$ all actions are executed and in the second step $t'\rightarrow t+1$ all network changes are considered.
 Thus, we assume that all actions are performed before the network changes. This is reasonable as a node is aware of changes in its neighborhood only in the next round, when receiving the messages.
By network changes we mean the new edges that could be created in the network. These new edges could possibly lead to the merging of some heaps at time $t+1$.
\newline

Induction base($t=0$):

For any edge $(u,v)\in E^0$ between $H_i$ and $H_j$ let $x\in H_i$ be the node such that $\Phi^0(u,v)=\phi^0(x,v)$. Then $P^0(x)\leq H_i$ as the path length is limited by the number of nodes in the heap, $ID^0(x,v)\leq n$ as not more than $n$ ids are in the system, and following from Lemma ~\ref{lem:stabilization}, $LT(x)\leq |H_i|$. Then $\Phi^0(u,v)\leq\phi^0(x,v)\leq 2|H_i|+n\leq 3n$.
\newline

Induction step($t\rightarrow t'$):
For any edge $(u,v)\in E^0$ between $H_i$ and $H_j$ let $x\in H_i$ be the node such that $\Phi^t(u,v)=\phi^t(x,v)$.

Then in round $t$ the following actions can be executed.
\begin{itemize}
    \item $x$ is inactive and can not forward an id. Then the heap is not linearized w.r.t. $x$, which implies that the linearization time decreases by one, i.e. $LT^{t'}(x)=LT^t(x)-1$ and $\phi^{t'}(x,v)=\phi^{t}(x,v)-1 \leq 2|H_i|+n-t-1$ as all other values are not affected.
    \item $u$ is active, but does not send $v$ by a \emph{forward-from-successor} message, then the number of ids that $u$ is sending before $v$ decreases by 1. Note that according to Lemma ~\ref{lem:id-delay}, $x$ hasn't sent a \emph{forward-from-successor} message with $v$ in a round before, as then there would be another node $y\in H_i$ with $\phi^t(y,v)<\phi^t(x,v)$. Then $ID^{t'}(x,v)\leq ID^t(x,v)-1$ and $\phi^{t'}(x,v)) =\phi^{t}(x,v)-1\leq 2|H_i|+n-t-1$.
    \item $u$ sends a \emph{forward-from-successor} message with $v$, then the length of the path for $v$ to the head $h_i$ decreases by 1 and $\phi^{t+1}(p(x),v)\leq P^t(x)-1+ID^t(x,v)+LT^t(x) =\phi^{t}(x,v)-1\leq 2|H_i|+n-t-1$
\end{itemize}
Thus, in total $\Phi^{t'}(u,v)\leq \Phi^{t}(u,v)-1\leq 2|H_i|+n-t-1\leq 3n-(t+1)$.
\newline

Induction step($t'\rightarrow t+1$):
Now we consider the possible network changes and their effects on the potential $\Phi^{t+1}(u,v)$. Let again $x\in H_i$ be the node such that $\Phi^t(u,v)=\phi^t(x,v)$ for an edge $(u,v)\in E^0$ between $H_i$ and $H_j$. The following network changes might occur:

\begin{itemize}
    \item some heaps $H_k$ and $H_l$ with $k\neq i$ and $l\neq i$ merge. This has no effect on $\Phi^{t'}(u,v)$. Thus, $\Phi^{t+1}(u,v)=\Phi^{t'}(u,v) \leq 2|H_i|+n-t-1\leq 3n-(t+1)$.
    \item Heaps $H_i$ and $H_k$ merge to $H'_i$.  Obviously the length of the path of $x$ can increase and $P^{t+1}(x)\leq P^{t'}(x)+|H_k|$. According to Lemma ~\ref{lem:stab-merge} also the linearization time of $x$ can increase and $LT^{t+1}(x)\leq LT^{t'}(x)+|H_k|$. In total $\Phi^{t+1}(u,v)\leq \Phi^{t'}(u,v)+2|H_k|\leq 2|H'_i|+n-t-1\leq 3n-(t+1)$.
\end{itemize}

Thus, in round $t+1$, $\Phi^{t+1}(u,v)\leq 2|H_i|+n-t-1\leq 3n-(t+1)$.
\end{proof}

Hence for every edge $(u,v)\in E^0$ with $u\in H_i$ and $v\in H_j$, $\Phi^t(u,v)=0$ after $3n$ rounds, which means that the head of $H_i$ scans or has scanned $v\in H_j$ resulting in the s-edge $(v,h_i)$. Thus, we immediately get Theorem ~\ref{theorem:heap-connectivity}.

\subsection{Phase 2: Towards one heap}

Based on the results of Phase 1, we will prove that after $O(n)$ further rounds a clique is formed.
%To continue with our analysis, we consider as $t=0$ the round at which Phase 1 was completed.
%We know that all remaining heaps in the graph are connected through s-edges.
For the purpose of the analysis below, we use the following definitions:

\begin{definition}
Let $ord(x)$ be the \emph{order} of a node $x$, i.e. the ranking of the node if we sort all $n$ nodes in the network according to their id ( i.e. the node with the largest id $m$ has $ord(m)=0$, the second largest has order 1, and so on).
\end{definition}

\begin{definition}
We define the potential $\lambda(x,y)$ of a pair of nodes $x$ and $y$ to be the positive integer equal to $\omega(x,y)=2\cdot ord(x)+2 \cdot ord(y)+K(x,y)$, where $K(x,y)=1$ if $x>y$ and 0 otherwise.
Also, let  for a set of edges $E' \subseteq E$,  $\Lambda(E')=\max_{(u,v) \in E'}\{\omega(u,v)\}$, if $E' \neq \emptyset$ and 0 otherwise.
\end{definition}

We proceed by showing the following lemma.

\begin{lemma}\label{lem:s-edge connectivity}
Two heaps $H_i$, $H_j$ that are connected by an s-edge $(x,y)_s$ at time $t$ will either stay connected via s-edges $(x_i,y_i)_s$ at time $t+1$ with the property that, $\forall (x_i,y_i)$, the potential $\omega(x_i,y_i)$ of the edges we consider at time $t+1$ is smaller that the potential $\omega(x,y)$ of the edge $(x,y)_s$ we considered at time $t$,
%(x\leq x_i \wedge x\leq y_i \wedge   y\leq x_i \wedge y\leq y_i) \vee (x\leq x_i \wedge x\leq y_i \wedge   y\leq x_i \wedge y\leq y_i)$
or $x$ and $y$ will be in the same heap.
\end{lemma}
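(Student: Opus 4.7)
The plan is to trace what happens to the s-edge $(x,y)_s$ during round $t$, exploiting that $y \in S(x)$ is only manipulated by $x$'s \textbf{forwardmax} action, which always deletes $y$ from $S(x)$. Reading the pseudocode, the action produces exactly one of two messages concerning $y$: either (A) a \emph{forward-head} to $p(x)$ carrying $y$ (precisely when $y = \max S(x) > \max N(x)$ and $p(x) \neq null$), or (B) a \emph{scanack} to $y$ carrying the current id $z = maxN$ (after the possible reassignment $maxN := maxS$ inside the action). Either way, $(x,y)_s$ is gone at time $t+1$; I must exhibit replacement s-edges with strictly smaller $\omega$, unless the two heaps merge.

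In Case (A), note that $p(x) \in N(x)$ (since \textbf{checkifhead} always chooses $p(x)$ from $N(x)$) forces $\max N(x) \ge p(x) > x$, and $y > \max N(x)$ then yields $y > p(x)$. If $y \notin N(p(x))$, then $p(x)$ inserts $y$ into $S(p(x))$ on receipt of the forward-head, creating the fresh s-edge $(p(x),y)_s$. Because $p(x) > x$ we have $ord(p(x)) \le ord(x) - 1$, so
\[
\omega(p(x),y) - \omega(x,y) = 2\bigl(ord(p(x)) - ord(x)\bigr) + \bigl(K(p(x),y) - K(x,y)\bigr) \le -2 + 1 = -1,
\]
the required strict decrease. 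If $y \in N(p(x))$ already, then the fact that $y > p(x)$ makes $y$ an eligible predecessor candidate for $p(x)$, so \textbf{checkifhead} binds $p(x)$ into the heap containing $y$ (or a smaller candidate $v$ with $p(x) < v \le y$), ultimately pulling $H_i$ into the heap of $y$.

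In Case (B), if $z \notin N(y)$ then the new s-edge $(y,z)_s$ arises at time $t+1$. If $x$ is not a head, $p(x) \in N(x)$ forces $z \ge p(x) > x$, hence $ord(z) < ord(x)$ and the $2\cdot ord$-contribution drops by at least $2$, dominating any $\pm 1$ change in $K$. When $x$ is itself a head of $H_i$, \textbf{checkifhead} ran before \textbf{forwardmax}, so there is no $v > x$ in $N(x)$ at that moment and thus $z = \max N(x) < x$; the surviving s-edge is then effectively \emph{reversed} relative to $(x,y)_s$, and I use the tie-break term $K$ to close the gap: in the prototypical reversal one has $ord(z) = ord(x)$ but $K(y,z) = 0 < 1 = K(x,y)$ (because $y > z$ while $x > y$ flips to $z < y$), yielding $\omega(y,z) = \omega(x,y) - 1$. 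Simultaneously, the instruction $N(x) := N(x) \cup S(x)$ inside this very forwardmax places $y$ (and other heads from $S(x)$) into $N(x)$, so the \emph{next} \textbf{checkifhead} at $x$ gives $x$ a predecessor and merges $H_i$ with another heap, eventually fusing $x$'s and $y$'s heaps.

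The hardest part will be Case (B) when $x$ is itself a head: the $2\cdot ord$-contribution need not drop, and the argument must rely on the $K$ tie-break together with careful bookkeeping that the scan--scanack exchange forces either a direction-reversed s-edge of smaller $\omega$ or a merge triggered by the next \textbf{checkifhead}. I would also verify that the ``already in $N$'' degenerate branches ($y \in N(p(x))$ in Case A, $z \in N(y)$ in Case B) are genuinely subsumed by the merging outcome, since in each degenerate branch the receiver had already accumulated a node that should have revised its predecessor at an earlier round.
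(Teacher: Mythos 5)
Your decomposition by what \textbf{forwardmax} does with $y\in S(x)$ (forward-head to $p(x)$ versus scanack back to $y$) mirrors the paper's case analysis, and your two easy sub-cases (fresh edge $(p(x),y)_s$, and the head/reversal case using the $K$ tie-break) correspond to the paper's cases. But the proposal has genuine gaps exactly where the paper's proof does its real work. First, in your Case (B) you only exhibit the new edge $(y,z)_s$ and bound its potential; that edge connects $H_j$ to the heap of $z$, which need not be $H_i$, so you have not shown that $H_i$ and $H_j$ stay s-connected at all. The paper closes this by distinguishing whether $z$ entered $S(x)$ this round (then $(x,z)_s$ also exists, giving the path $(x,z)_s,(y,z)_s$) or was already known, and in the latter case by tracing \emph{how} $x$ learned $z$ (an $E^0$ edge plus Lemma~\ref{lem:phi}, a forward-from-predecessor, an earlier membership in $S(x)$, or a forward-from-successor traced to a descendant's $E^0$ edge), which yields a pre-existing s-connection between $H_i$ and $heap(z)$ through edges with strictly larger endpoints, hence smaller potential. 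Nothing in your argument plays this role, and it is the crux of the lemma.

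Second, your treatment of the degenerate ``already in $N$'' branches and of the head sub-case of (B) appeals to an \emph{eventual} merge (``checkifhead binds $p(x)$ into the heap containing $y$\ldots ultimately pulling $H_i$ into the heap of $y$'', ``eventually fusing $x$'s and $y$'s heaps''). The lemma requires, at time $t+1$, either s-connection through edges of strictly smaller potential or that $x$ and $y$ are already in one heap; Lemma~\ref{lem:scon} and Theorem~\ref{theorem:heap} consume this as a per-round potential drop, so an eventual merge is not enough. Moreover the claimed merging behaviour is not correct as stated: \textbf{checkifhead} only re-selects a predecessor when $p=null$ or invalid, so a non-head $p(x)$ that already knows $y>p(x)$ does nothing, and a head re-binds to the \emph{minimum} larger id it knows, which may lie in a third heap, so $x$ and $y$ need not end up together. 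Finally, in the head sub-case of (B) your bookkeeping is internally inconsistent: you take $z=\max N(x)<x$ and then assert $ord(z)=ord(x)$, which cannot hold for $z\neq x$; with $z<x$ the potential of $(y,z)_s$ can in fact exceed $\omega(x,y)$, so the $K$ tie-break alone does not close this case. These points need the paper's tracing argument (or an equivalent invariant) to be a proof.
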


\begin{proof}
Let $(x,y)_s$ be a s-edge connecting $H_i$ and $H_j$, i.e. $x\in H_i$, $y\in H_j$. Then according to our algorithm the following actions might be executed.

\begin{itemize}
    \item $x$ is the head of $H_i$ and $y>x$ then $y=p(x)$ and $x$ sends a \emph{pred-request} message to $y$, resulting in a merge of $H_i$ and $H_j$.
    \item $x$ is the head of $H_i$ and $x>y$ and $y$ is a new id, then $x$ sends a \emph{scan-ack} to $y$ with its own id and the edge $(y,x)_s$ is created connecting $H_i$ and $H_j$. Then $\omega(y,x)= 2ord(x)+2ord(y)+0 < 2ord(x)+2ord(y)+1 =\omega(x,y)$.

    \item $x$ forwards $y$ to $p(x)$ by a \emph{forward-head} message, such that $y\in S(p(x))$ and $H_i$ and $H_j$ are connected by $(p(x),y)_s$. Then
$\omega(p(x),y)= 2 ord(p(x))+2 ord(y)+K(p(x),y)< 2 ord(x)+ 2 ord(y)+K(x,y) =\omega(x,y).$
    \item $x$ receives a new id $z\in S(x)$ with $z=\max\left\{v\in N(x)\right\}$, such that $z>y$ and $z>x$. Then $x$ sends a \emph{scan-ack} containing $z$ to $y$ and the s-edge $(x,y)_s$ is substituted by s-edges $(x,z)_s$ and $(y,z)_s$. And $H_i$ and $H_j$ are connected via s-edges.
Note that since $p(x)>x$ and  $z>x,y$ , $ord(p(x))<ord(x),ord(z)<ord(x)$ and $ord(z)<ord(y)$.
The potential of the new edges is:
$\omega(p(x),z)= 2 ord(p(x))+2 ord(z)+K(p(x),z)< 2 ord(x)+ 2 ord(y)+K(x,y) =\omega_t(x,y).$
$\omega(y,z)= 2 ord(y)+2 ord(z)+0< 2 ord(x)+ 2 ord(y)+K(x,y) =\omega_t(x,y).$
    \item $x$ knows an id $z\in H_k$ with $z=\max\left\{v\in N(x)\right\},z>y$ and $z\notin S(x)$. Then one of the following cases hold:
\begin{enumerate}
    \item  $(x,z)\in E^0$, then according to Lemma ~\ref{lem:phi} a node $u>x$ with $u\in H_i$ has scanned $z$ resulting in the s-edge $(z,u)_s$ s-connecting $H_i$ and $H_k$.
    \item $x$ has received $z$ by a \emph{forward-from-predecessor} message. Then a node $u>x$ with $u\in H_i$ has scanned $z$ resulting in the s-edge $(z,u)_s$ s-connecting $H_i$ and $H_k$.
        \item $z$ was in $S(x)$ in a previous round, then the edge $(x,z)_s$ existed s-connecting $H_i$ and $H_k$.
    \item $x$ has received $z$ by a \emph{forward-from-successor} message. Then there is a node $v\leq x$ in the sub heap rooted at $x$ such that $(v,z)\in E^0$. Then according to Lemma ~\ref{lem:phi} a node $w \in H_i$ with $w>v$ has scanned $z$ and the s-edge $(z,w)_s$ existed s-connecting $H_i$ and $H_k$. If $w>x$,  $H_i$ and $H_k$ are s-connected by s-edges $(x_i,y_i)_s$ with $\forall (x_i,y_i): (x<w<x_i \wedge x<w<y_i \wedge  z\leq x_i \wedge z\leq y_i) \vee (x<w\leq x_i \wedge x<w\leq y_i \wedge   z< x_i \wedge z< y_i)$. If $w<x$ then at least as many rounds have passed since $w$ has scanned $z$ as there are nodes on the path from $w$ to $x$, because $z$ has to be forwarded as many times. Then the edge $(z,w)_s$ has been forwarded or substituted $t$ times or $H_i$ and $H_k$ have merged. Then $H_i$ and $H_k$ are s-connected by s-edges $(x_i,y_i)_s$ with $\forall (x_i,y_i): (x<w<x_i \wedge x<w<y_i \wedge  z\leq x_i \wedge z\leq y_i) \vee (x<w\leq x_i \wedge x<w\leq y_i \wedge   z< x_i \wedge z< y_i)$.
\end{enumerate}
In each case $x$ sends a \emph{scan-ack} containing $z$ to $y$ and the s-edge $(y,z)_s$ is created. And $H_i$ and $H_j$ are s-connected over s-edges and in all cases the potential shrinks, since for each new s-edge it holds that at least one node is greater and the other node not smaller than the nodes in the edge they replace.
    \item $x$ is the head of $H_i$ and $x<y$, then $H_i$ and $H_j$ merge to one heap.
    \item $x$ is the head of $H_i$ and $x>y$ and $y$ was in $N(x)$ in a previous round, then $H_i$ and $H_j$ are already s-connected by s-edges $(x_i,y_i)_s$ with greater ids by the same arguments as in the case before. Since the ids are greater, the potential shrinks also here.
\end{itemize}
\end{proof}

\begin{lemma}\label{lem:scon}
If $E_t$ is an s-connectivity set at round $t$, there exists an s-connectivity set $E_{t+1}$ at round $t+1$ such that $\Lambda(E_{t+1})<\Lambda(E_t)$.
\end{lemma}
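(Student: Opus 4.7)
The plan is to apply Lemma~\ref{lem:s-edge connectivity} edge-by-edge to $E_t$ and stitch the resulting replacements together into the desired set $E_{t+1}$. The monotonicity of the potential along each replacement will give the strict decrease of $\Lambda$, while an argument about how heaps can only merge (never split) from round $t$ to $t+1$ will give the global s-connectivity.

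First I would set up the construction. For every $e = (x,y)_s \in E_t$ invoke Lemma~\ref{lem:s-edge connectivity}. In the case where $x$ and $y$ lie in the same heap at round $t+1$, set the replacement $R(e) := \emptyset$. Otherwise set $R(e)$ to be the family of new s-edges $(x_i,y_i)_s$ guaranteed by that lemma, all of which satisfy $\omega(x_i,y_i) < \omega(x,y)$. Define
\[
E_{t+1} \;:=\; \bigcup_{e \in E_t} R(e).
\]
By construction every edge in $E_{t+1}$ is an s-edge existing at round $t+1$, and every such edge has potential strictly less than $\omega(e) \le \Lambda(E_t)$ for its originating $e$, so $\Lambda(E_{t+1}) < \Lambda(E_t)$ follows immediately (and trivially when $E_{t+1} = \emptyset$, since $\Lambda(\emptyset) = 0 \le \Lambda(E_t)$, and in fact $\Lambda(E_t) > 0$ as long as there is more than one heap).

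The main obstacle is showing that $E_{t+1}$ actually s-connects all heaps present at round $t+1$. Here I would exploit the fact that any heap $H'$ at round $t+1$ is a union of one or more heaps from round $t$, since the algorithm only merges heaps (a parent pointer may change but nodes never split off into a new heap with a smaller head). Pick two heaps $H'_\alpha, H'_\beta$ at round $t+1$, and pick any round-$t$ sub-heaps $H_i \subseteq H'_\alpha$ and $H_j \subseteq H'_\beta$. By hypothesis there is an undirected path of round-$t$ heaps $H_i = H_{i_0}, H_{i_1}, \dots, H_{i_k} = H_j$ whose consecutive members are connected through edges $e_\ell \in E_t$ (together with intra-heap edges at round $t$, which remain intra-heap at round $t+1$ by the no-splitting observation). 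For each $e_\ell$: if Lemma~\ref{lem:s-edge connectivity} put it in the ``same heap'' case then the heaps containing the two endpoints of $e_\ell$ coincide at round $t+1$; otherwise $R(e_\ell) \subseteq E_{t+1}$ keeps the two sides s-connected at round $t+1$. Concatenating along $\ell = 1, \dots, k$ yields an undirected path at round $t+1$ from $H'_\alpha$ to $H'_\beta$ using only edges in $E_{t+1}$ and edges internal to heaps of round $t+1$, which is exactly the s-connectivity requirement.

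The delicate point I would check carefully is that the replacement s-edges produced by Lemma~\ref{lem:s-edge connectivity} genuinely live in the graph at round $t+1$ (not merely as a potential combinatorial object), and that no replacement s-edge in case (b) has both endpoints collapsing into the same round-$(t+1)$ heap, which would still be fine but needs to be absorbed into the argument. Both points are immediate from the case analysis inside the proof of Lemma~\ref{lem:s-edge connectivity}, so the remainder of the argument is routine bookkeeping; the strict decrease $\Lambda(E_{t+1}) < \Lambda(E_t)$ is then just the maximum over replacements of strictly smaller potentials.
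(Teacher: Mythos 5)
Your proposal is correct and takes essentially the same route as the paper: replace each edge of $E_t$ by the replacement s-edges supplied by Lemma~\ref{lem:s-edge connectivity} (or by nothing when the endpoints end up in the same heap), observe that this preserves s-connectivity of all heaps at round $t+1$, and conclude the strict drop in $\Lambda$ from the strict per-edge potential decrease. Your explicit remarks that heaps only merge (never split) and the handling of the $E_{t+1}=\emptyset$ case merely spell out what the paper's proof leaves implicit.
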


\begin{proof}
Let $E_t$ be an s-connectivity set a round $t$. %We show that we can find an s-connectivity set $E_{t+1}$ at round $t+1$ such that $\Lambda(E_{t+1})<\Lambda(E_t)$.
 We replace every edge $(x,y)_s\in E_t$ with the edges $(x_i,y_i)_s$ as described in the lemma above.
For every pair of heaps that were s-connected at $t$ through an edge in $E_t$, there exists a set of s-edges of smaller potential that s-connects the two heaps at $t+1$. We include these edges in $E_{t+1}$. But at round $t$ all pairs of heaps are s-connected through $E_t$, which means that at round $t+1$ all pairs of heaps are also s-connected through $E_{t+1}$. So, $E_{t+1}$ is an s-connectivity set at round $t+1$.
% Since all heaps remain connected, the new set of edges computed $E_{t+1}$ is an s-connectivity set.
 Also since all the edges in $E_{t+1}$ have less potential as the ones the replaced in $E_t$, $\Lambda(E_{t+1})<\Lambda(E_t)$.
\end{proof}

\begin{theorem}\label{theorem:heap}
After at most 4n+1 rounds, all heaps have been merged into one.
\end{theorem}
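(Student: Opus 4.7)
The plan is to chain Theorem~\ref{theorem:heap-connectivity} (the Phase~1 guarantee that all inter-heap $E^0$-edges are absorbed into s-edges within $\mathcal{O}(n)$ rounds) with Lemma~\ref{lem:scon} (the strict monotone decrease of the s-connectivity potential $\Lambda$). The potential $\Lambda$ is a nonnegative integer, so once an s-connectivity set is in place, a bounded number of further rounds must drive $\Lambda$ to zero, at which point only a single heap can remain.

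First I would invoke Theorem~\ref{theorem:heap-connectivity}: after at most $3n$ rounds, every pair of heaps originally joined by an edge $(u,v) \in E^0$ has either merged or been s-connected. Since the initial overlay is weakly connected, taking the union of these s-connections over all edges in $E^0$ yields an s-connectivity set $E_{3n}$ at round $3n$. Its potential satisfies $\Lambda(E_{3n}) \le \max_{x \ne y} \omega(x,y) \le 2(n-1) + 2(n-2) + 1 = 4n-5$, because $ord(\cdot)$ takes values in $\{0,1,\ldots,n-1\}$ and the two endpoints of any edge are distinct.

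Next I would apply Lemma~\ref{lem:scon} inductively: for every $t \ge 3n$ there exists an s-connectivity set $E_{t+1}$ with $\Lambda(E_{t+1}) < \Lambda(E_t)$. Since $\Lambda$ is nonnegative and integer-valued, $\Lambda(E_t) = 0$ after at most $4n-5$ further rounds. By definition, $\Lambda(E_t) = 0$ iff $E_t = \emptyset$, which can happen only when there are no distinct heaps left to s-connect; hence a single heap remains. This already gives a bound of $3n + (4n-5) = \mathcal{O}(n)$ rounds. To tighten this to the claimed $4n+1$, I would observe that the two phases run concurrently: each s-edge created during Phase~1 typically has the head of a heap (a large-$ord$ node) as one endpoint, so the potentials of the edges appearing in the growing s-connectivity set are already shrinking with $t$. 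Folding this into the accounting replaces the na\"ive additive $4n-5$ by a much smaller increment past round~$3n$.

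The main obstacle I expect is the coherent handoff between the two phases. Theorem~\ref{theorem:heap-connectivity} is stated per pair of heaps joined by a specific $E^0$-edge, whereas Lemma~\ref{lem:scon} requires a \emph{single} global s-connectivity set at every round, and Lemma~\ref{lem:s-edge connectivity} only promises \emph{local} substitutions at each s-edge. I would therefore prove, by induction on $t \ge 3n$, that the union of the substitutions guaranteed by Lemma~\ref{lem:s-edge connectivity} --- taken over every pair of still-distinct heaps whose Phase~1 witness persists --- is itself an s-connectivity set at round $t+1$. The case analysis inside Lemma~\ref{lem:s-edge connectivity} already verifies that heap merges and s-edge replacements never break inter-heap connectivity, so what remains is the bookkeeping that these local witnesses glue into one global set without gaps.
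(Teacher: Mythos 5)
Your proposal follows the paper's proof essentially verbatim: invoke Theorem~\ref{theorem:heap-connectivity} to obtain a first s-connectivity set, bound its potential by $\Lambda \le 2n+2n+1$ (you compute the slightly sharper $4n-5$), and then apply Lemma~\ref{lem:scon} round by round so that the nonnegative integer potential strictly decreases to $0$, at which point the s-connectivity set is empty and only one heap remains. The ``tightening'' you worry about at the end is unnecessary: the paper reads the $4n+1$ bound as counting rounds only from the moment the first s-connectivity set exists (its estimate is $\Lambda(E_0)\le 4n+1$), so your additive accounting already matches the paper's argument and no concurrent-phase refinement is needed.
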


\begin{proof}
%A the network and therefore the heaps were initially weakly connected by edges $(u,v)\in E^0$,
From Theorem ~\ref{theorem:heap-connectivity} we know that all heaps are s-connected after $\mathcal O(n)$ rounds. So after $\mathcal O(n)$ rounds there exists the first s-connectivity set, $E_0$, with $\Lambda(E_0)=\max_{(u,v) \in E_0}\{\omega(u,v)\}=\max_{(u,v) \in E_0}\{2ord(u)+2ord(v)+K(u,v)\}\leq2n+2n+1=4n+1$.
Since for each round $t$ and an s-connectivity set $E_t$, an s-connectivity set $E_{t+1}$ for round $t+1$ can be found, such that $\Lambda(E_{t+1})<\Lambda(E_t)$, (i.e. the potential of the s-connectivity set shrinks by every round)
after at most $4n+1$ rounds (after the existence of $E_s$) there exists an s-connectivity set $E_{\infty}$, such that $\Lambda(E_{\infty})= 0$. This means that $E_{\infty}$ is the empty set.
Since $E_{\infty}$ is an empty s-connectivity set connecting all the heaps of the graph,
we know that the graph has only one heap.
\end{proof}

\subsection{Phase 3: Sorted list and Clique}
\begin{theorem}\label{theorem:clique}
If all nodes form one heap, it takes $\mathcal O(n)$ time until the network reaches a legal state.
\end{theorem}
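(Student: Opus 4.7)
The plan is to split the convergence to a legal state into three subphases, each shown to take $O(n)$ rounds. Let $r = \max V$ be the unique head of the single heap.

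Subphase 1 (linearization). Applying Corollary ~\ref{lem:stabilization} to the heap of size at most $n$ gives that within $n$ further rounds the heap is a sorted list, so $p(v) > v$ and $s(p(v)) = v$ for every non-root $v$. Then $r$ sits at the top of a successor/predecessor chain of length at most $n-1$ reaching every other node, no further pointer reshuffling occurs, all non-root nodes are permanently active, and the activate/deactivate/delete-successor traffic dies out.

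Subphase 2 (upward collection). I claim that within $O(n)$ further rounds $N(r) = V \setminus \{r\}$. The \textbf{forwardtopred} action sends, in round-robin, one id from $N(x)$ per round upward to $p(x)$. Since $|N(x)| \leq n$ and the chain to $r$ has length at most $n$, Lemma ~\ref{lem:id-delay} yields that every id already present in some $N(x)$ on the chain reaches $r$ within at most $2n$ rounds. For ids that are not yet in the $N$-set of any ancestor when the list is formed (which can happen, because the pred-request handler updates $s$ but not $N$), the plan is to exploit the scan/forward-head/scanack machinery: while $N(r) \neq V \setminus \{r\}$, the \textbf{checkifhead} action at $r$ scans one neighbor per round; each scanned node $u$ responds via \textbf{forwardmax} with the maximum of $N(u)$, inserts $r$ into $N(u)$, and forwards any freshly discovered maximum upward via \emph{forward-head}. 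A potential-style bookkeeping in the spirit of Lemma ~\ref{lem:s-edge connectivity} then bounds the time for every id to enter $N(r)$ by $O(n)$.

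Subphase 3 (downward distribution). Once $N(r) = V \setminus \{r\}$, the \textbf{checkifhead} action at $r$ (which still finds $p(r) = $ null) inserts one id per round from $N(r)$ into $L(r)$ at the tail, and \textbf{forwardtosuc} sends $L(r)$'s head down to $s(r)$ as a \emph{forward-from-predecessor} message. Upon receipt, the id is placed at the tail of the receiver's $N$ and at the head of its $L$, so it is forwarded to the next successor in the very next round. A symmetric analogue of Lemma ~\ref{lem:id-delay} (applicable because every non-root node now has a valid $s(x)$ and stays active) shows that each id traverses the list of length $\leq n-1$ without delay, so after $n$ rounds of scanning by $r$ plus at most $n$ further rounds of downward travel, $N(x) = V \setminus \{x\}$ at every $x$, i.e., the legal state has been reached.

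The main obstacle is Subphase 2: just having a sorted list does not place $v$ into $N(p(v))$, so Lemma ~\ref{lem:id-delay} alone is insufficient. The hard part is to prove that the scan/forward-head interplay, driven by $r$'s one-scan-per-round and the bounded depth of the chain, inserts every missing id into $N(r)$ within $O(n)$ rounds. Subphases 1 and 3 reduce essentially to invoking existing lemmas (plus a symmetric version of Lemma ~\ref{lem:id-delay}).
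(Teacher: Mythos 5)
Your decomposition is exactly the paper's: linearize the single heap via Corollary~\ref{lem:stabilization}, let every id travel up to the root by round-robin \emph{forward-from-successor} forwarding that, by Lemma~\ref{lem:id-delay}, is never delayed, and then let the root (and each successive node) pump all ids down the sorted list via \emph{forward-from-predecessor}, each leg costing $\mathcal O(n)$ rounds. Your Subphases 1 and 3 coincide with the paper's proof essentially verbatim.

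The gap is the one you flag yourself: Subphase 2 is not proved. The paper at this point simply asserts that, once the sorted list exists, each node's id is forwarded upward round-robin and reaches the root within at most $n$ rounds, using only Lemma~\ref{lem:id-delay}; it introduces no further machinery. You instead observe (a legitimate reading of the pseudocode, since e.g.\ the \emph{pred-request} handler updates $s(x)$ but not $N(x)$) that an id need not sit in the $N$-list of any node on the chain, and you propose to close this hole with ``a potential-style bookkeeping in the spirit of Lemma~\ref{lem:s-edge connectivity}''. But no potential is defined, no invariant is stated, and that lemma does not transfer: its potential measures s-connectivity between \emph{distinct} heaps and is used to force merges, whereas here there is a single heap and the quantity to control is how long the scan/forward-head/scanack interplay takes to insert a missing id into $N(r)$. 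Note that the root only scans nodes already in $N(r)$, a \emph{scanack} carries the maximum of the scanned node's neighborhood rather than the scanned node's own id, and \emph{forward-head} messages are triggered only by ids exceeding the sender's current maximum; so it is not evident that this machinery delivers an arbitrary missing id to the root at all, let alone within $\mathcal O(n)$ rounds. As written, the central bound of your Subphase 2 is a plan rather than a proof, so the argument is incomplete exactly at the step you identify as the hard part.
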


\begin{proof}
Since at this point we only have one head the heap will be linearized after $\mathcal O(n)$ rounds. This follows directly from Lemma ~\ref{lem:stabilization}.
Once the heap is linearized and forms a sorted list, each node's $id$ will be sent to the root, the remaining head, after at most $n$ rounds. So the root will be aware of every $id$. The root, as it sends according to the round-robin process all its information to its successor, will send after $n$ rounds all the $id$s to it, and the successor will do the same. As a consequence, all nodes will receive all $id$s at $O(n)$ rounds. Adding all this together, after $O(n)$ all nodes will know each other and a clique will be constructed.
\end{proof}

Combining Theorem ~\ref{theorem:valid}, Theorem ~\ref{theorem:heap-connectivity}, Theorem ~\ref{theorem:heap} and Theorem ~\ref{theorem:clique} our main theorem Theorem ~\ref{theorem:main} holds.

\section{Message complexity}
In this section we give an upper bound for the work spent by each node. We already mentioned that we will distinguish two types of work. The stabilization work, that is spent until a clique is formed, and the maintenance work, that is spent in each round in a legal state. We count the work of a node in the number of messages sent and received.

\subsection{Stabilization work}
According to Theorem ~\ref{theorem:main} it takes $\mathcal O(n)$ rounds to reach a legal state. In each round each active node sends a message to its predecessor and its successor (\emph{forward-from-successor}, \emph{forward-from-predecessor})  and receives a message from them (\emph{forward-from-successor}, \emph{forward-from-predecessor}).
Also, a node sends at most one \emph{activate/deactivate} message to its successor at each round.
This gives a resulting work of $\mathcal O(n)$ for each node or $\mathcal O(n^2)$ in total. By the following lemmas we show that the additional messages sent and received during the linearization are at most $\mathcal O(n)$ for each node.

\begin{lemma}\label{lem:stab1}
Each node sends and receives at most $\mathcal O(n)$ \emph{pred-request}, \emph{pred-accept} and \emph{new-predecessor} messages during the linearization phase.
\end{lemma}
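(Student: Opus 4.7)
The plan is to bound, for each of the three message types, both the per-node sent count and the per-node received count by $\mathcal O(n)$. Since Theorem~\ref{theorem:main} says the linearization phase lasts only $\mathcal O(n)$ rounds, any quantity that accrues at a bounded rate per round is automatically $\mathcal O(n)$; the burden is therefore on the received counts, where a single node could a priori be the target of a burst of messages in one round. The main tool I would set up first is a monotonicity invariant: starting from a valid state, $p(x)$ is non-increasing in time. This is immediate from the pseudocode, since the only assignments to $p(x)$ are (i) in checkifhead, which requires $p(x)$ to be null or $<x$ (impossible once valid), and (ii) in the new-predecessor handler, which is guarded by $m.id2>x \wedge m.id2<p(x)$ and therefore strictly decreases $p(x)$. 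Hence for each pair $(u,v)$ the interval of rounds in which $p(u)=v$ is contiguous.

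Given the invariant, the sent counts are almost immediate. The checkifhead action fires once per round and issues at most one pred-request, so each node sends $\mathcal O(n)$ pred-requests. Pred-accept and new-predecessor messages are emitted only from inside the pred-request handler (at most one of each per incoming pred-request), so their sent counts at $v$ are dominated by the number of pred-requests that $v$ receives. To bound the latter, let $P_v(t)=\{u:p(u)(t)=v\}$. In any round at most one member of $P_v(t)$ (the eventual maximum) survives as $s(v)$; every other member is sent a new-predecessor message pointing to that new $s(v)$, which is strictly smaller than $v$, and by monotonicity such a node never re-enters $P_v$. Hence $\sum_t |P_v(t)|$ is bounded by $n$ transient visits plus the one sustained occupant per round over $\mathcal O(n)$ rounds, i.e.\ $\mathcal O(n)$.

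For pred-accept received by $x$ I would split into two cases: (A) the pred-accept is the response to a pred-request that $x$ itself has just issued, which is bounded by the $\mathcal O(n)$ pred-requests $x$ sends; and (B) $x$ was already $s(v)$ when $v$ processed a pred-request from some $u<x$, in which case $u$ simultaneously receives a new-predecessor redirecting it to $x$, strictly lowering $p(u)$, so by monotonicity each $u$ can produce at most one case-(B) event globally, totalling $\leq n$. The same two-case split handles new-predecessor received by $x$: case (A) is bounded by the $\mathcal O(n)$ pred-requests $x$ sends (as grandson), and case (B) is the event of $x$ being displaced as an old $s(v)$, which on acceptance strictly decreases $p(x)$ and hence happens at most $n$ times. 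The main obstacle throughout is precisely these burst-sensitive received counts; the monotonicity of $p(x)$ together with the observation that every ``redirection'' is witnessed by a strict decrease of some node's predecessor is what turns per-round bursts into per-node amortized $\mathcal O(n)$ budgets.
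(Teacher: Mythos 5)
Your proposal is correct and follows essentially the same route as the paper: the central tool in both is the monotone, strictly decreasing evolution of $p(x)$ after the valid state is reached, used to show that a node is redirected away from a given predecessor at most once, so that bursts of received \emph{pred-request} messages and the induced \emph{pred-accept}/\emph{new-predecessor} traffic amortize to $\mathcal O(n)$ per node over the $\mathcal O(n)$ rounds. Your explicit charging via $P_v(t)$ and the case split for received \emph{pred-accept}/\emph{new-predecessor} messages is in fact somewhat more careful than the paper's blanket per-round claim (and the slight imprecisions---a head's $p(x)$ going from null to a value, and ``one case-(B) event globally'' rather than per receiver---do not affect the argument), but the key idea is the same.
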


\begin{proof}
In each round each node sends at most one \emph{pred-request} and one \emph{pred-accept} message and receives at most one \emph{pred-accept} or \emph{new-predecessor} message. It remains to show that each node receives at most $\mathcal O(n)$ \emph{pred-request} and sends at most $\mathcal O(n)$ \emph{new-predecessor} messages. Note that it suffices to show that each node receives at most $\mathcal O(n)$ \emph{pred-request}, as the number of \emph{new-predecessor} messages directly depends on the number of received \emph{pred-request} messages, to each node, that sends a \emph{pred-request} to $u$ that is not $u$' successor, $u$ sends a \emph{new-predecessor} message. A node $u$ only sends at most one \emph{new-predecessor} message to each other node $v$. By receiving this message $v$ changes its predecessor. Thus before $u$ sends another \emph{new-predecessor} message to $v$, $v$ has to change its predecessor back to $u$. A predecessor is only changed if a root receives an id greater than its own id, or if the predecessor of a node sends a \emph{new-predecessor}. $v$ cannot be a head, thus $v$'s predecessor is only changed by another \emph{new-predecessor} message. But $v$'s predecessor can not be changed back to $u$ as the id of the new predecessor is strictly decreasing. By this monotonicity it follows that a node $u$ only sends at most one \emph{new-predecessor} message to each other node $v$. Thus, every node only sends and receives $\mathcal O(n)$ \emph{pred-request} and new predecessor messages.
\end{proof}

\begin{lemma}\label{lem:stab2}
Each node sends and receives at most $\mathcal O(n)$ \emph{scan} and \emph{scan-ack} messages during the linearization phase.
\end{lemma}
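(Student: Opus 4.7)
The plan is to bound separately the number of \emph{scan} messages sent, \emph{scan-ack} messages sent, \emph{scan} messages received, and \emph{scan-ack} messages received, reducing the scan-ack counts to the scan counts plus cheap additional contributions.

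For scans sent by $x$: a \emph{scan} is emitted only on the ``$p(x)=\mathrm{null}$'' branch of the \textbf{checkifhead} periodic action, i.e., only while $x$ is currently the head of its heap, and at most once per round. By Theorem~\ref{theorem:main} the linearization phase lasts $\mathcal O(n)$ rounds, so $x$ sends at most $\mathcal O(n)$ scans. For scan-acks sent by $x$: one scan-ack is dispatched in \textbf{forwardmax} for each id currently in $S(x)$, after which that id is removed from $S(x)$, so the total scan-acks sent equals the total insertions into $S(x)$ during the phase. An insertion is caused only by a received \emph{scan}, a \emph{scanack} from a sender not yet in $N(x)$, or a \emph{forward-head} from $s(x)$ with id not yet in $N(x)$. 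In the last two cases the very next \textbf{forwardmax} unions $S(x)$ into $N(x)$, so each distinct sender can trigger those cases at most once, contributing $\mathcal O(n)$ insertions in total. The remaining insertions are one per received scan, reducing the scan-ack count to the scan-received count.

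For scans received by $v$: each scan arrives from a head $h$ with $v \in N(h)$ and with $h$'s round-robin pointer currently at $v$; the pointer then advances, so $h$ cannot scan $v$ again until it traverses all of $N(h) \setminus \{v\}$. I would set up a charging in which each head $h$ sends at most $\mathcal O(n)$ scans during its existence, and each additional scan to the same $v$ after the first is charged either to a distinct full round-robin pass (which uses up $|N(h)|-1$ of $h$'s scan budget on other ids) or to a growth event in $N(h)$ (at most $n-1$ per head, since $|N(h)|$ is monotone non-decreasing during the phase). Summing over heads and using Lemma~\ref{lem:merge}, by which each merge permanently retires one head from the system and thereby bounds the total ``head life'' seen by any fixed $v$, yields $\mathcal O(n)$ scans received. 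For scan-acks received by $v$: every such message was triggered by $v \in S(u)$ for some $u$, which was in turn caused by a scan, scanack, or forward-head sent by $v$. The first two are bounded by the items above; the third is bounded since $v$ emits at most one forward-head per round inside \textbf{forwardmax}, and the phase has $\mathcal O(n)$ rounds.

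The main obstacle is the scan-received bound: the naive count ``up to $n$ heads $\times$ $\mathcal O(n)$ scans per head'' only gives an $\mathcal O(n^2)$ global figure, which is $\mathcal O(n)$ per node only on average. Translating this into a worst-case per-node bound requires both (a) exploiting that each $N(h)$ grows monotonically, which stretches the round-robin period so that any fixed $v$ is revisited by the same head less and less frequently, and (b) invoking Lemma~\ref{lem:merge} to argue that the number of distinct heads that can ever scan $v$ is controlled by the merge dynamics. Making this intuition rigorous via a combined potential/charging argument is, in my view, the delicate step of the proof.
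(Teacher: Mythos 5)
Your accounting for \emph{scan} messages sent (one per round per head, over an $\mathcal O(n)$-round phase) and your reduction of \emph{scanack} messages sent/received to insertions into the $S(\cdot)$ sets matches the paper and is fine. The genuine gap is exactly where you flag it: the bound on \emph{scan} messages \emph{received} by a fixed node $v$. The charging you propose (round-robin pointer advancement, monotone growth of $N(h)$, and Lemma~\ref{lem:merge} retiring a head per merge) does not yield a per-node $\mathcal O(n)$ bound. Consider $\Theta(n)$ heads whose neighborhoods are tiny sets all containing $v$, e.g.\ $N(h_i)=\{v,w_i\}$: each such head revisits $v$ every other round, a full round-robin pass over $N(h_i)\setminus\{v\}$ costs it only one other scan, and $N(h_i)$ need not grow until it is told something new. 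Nothing in your charging forces these heads to learn about one another, so Lemma~\ref{lem:merge} alone does not ``retire'' them, and $v$ could absorb $\Theta(n^2)$ scans under your argument. The per-head budget of $\mathcal O(n)$ scans only gives the global $\mathcal O(n^2)$ figure you yourself call insufficient.

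The missing ingredient, which is the heart of the paper's proof, is the \emph{content} of the scanack: a scanned node replies with the \emph{maximum} id in its neighborhood. Consequently every scanning head whose id is smaller than that maximum learns a strictly larger id, sets it as predecessor, and permanently stops scanning (it can never become a head again, since $N(\cdot)$ never shrinks), so it scans $v$ at most once more. This turns the scans received by $v$ into three bounded categories: scans from a head exceeding the largest id $v$ has seen so far (at most $n$, since that maximum strictly increases), scans from the current maximal head (at most one per round, hence $\mathcal O(n)$ over the phase), and scans whose scanack permanently deactivates the sender (at most $\mathcal O(n)$, since each such event increases the number of forever-inactive heads). Without this scanack-deactivation mechanism your ``delicate step'' cannot be completed along the route you sketch; with it, the potential/charging machinery you envisage is unnecessary.
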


\begin{proof}
Only heads of heaps send scan messages. In each round each head sends exactly one \emph{scan} message. Each scanned node sends a \emph{scanack} message back or stores the id of the head in $S(x)$. Obviously a node can be scanned by up to $n$ different heads in one round. Which would lead to a work of $\mathcal O(n^2)$ by receiving these messages. But as a node sends the maximal id in its neighborhood with a \emph{scanack} message, it is scanned at most once by heads with an id smaller then $max$. By receiving this id the scanning node recognizes, if it is still a head, that it is not the largest id and cannot be a head of the heap and sets its predecessor and stops scanning. So a node can be scanned by $\mathcal O(n)$ heads before the heads stop scanning, because they received a \emph{scanack}. A head that is not the maximal head, that scanned the node so far, will only scan the node one more time and then stop scanning. So a node receives at most $\mathcal O(n)$ scan messages from a new maximal head, $\mathcal O(n)$ messages from the current maximal head, as each head only sends one scan message per round, and all other scans increase the number of inactive heads, which is limited by $\mathcal O(n)$. Regarding the \emph{scanack} messages, since each head scans only once in each round, it receives also at most one \emph{scanack} (that result from sent \emph{scan} messages) message in each round. A node $x$ can also receive a  \emph{scanack} message when sending a \emph{scanack} message, but this happen only the if the node to which the \emph{scanack} was sent does not know $x$, so all in all at most  $n$ times.  So, all in all, a node receives $\mathcal O(n)$ at the whole linearization phase.
\end{proof}

\begin{lemma}\label{lem:stab3}
Each node sends and receives at most $\mathcal O(n)$ \emph{forward-head} messages through the linearization phase.
\end{lemma}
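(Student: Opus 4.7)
The plan is to split the claim into two independent bounds and attack each via a monotonicity argument on $\max N(\cdot)$.

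Sent side. Inspecting the \textbf{forwardmax} periodic action, $x$ only sends a \emph{forward-head} when $maxS>maxN$, and immediately the code performs $N(x):=N(x)\cup S(x)$ and $maxN:=maxS$, so the updated $\max N(x)$ strictly exceeds its previous value. Since $N(x)$ is only ever enlarged by idempotent insertions and is never shrunk during the algorithm, $\max N(x)$ is a non-decreasing function of time taking values among the $n$ identifiers. It can therefore strictly increase at most $n-1$ times, and each such increase accounts for at most one \emph{forward-head} emitted by $x$, bounding the number of sends from $x$ by $\mathcal{O}(n)$.

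Received side. A \emph{forward-head} is processed by $x$ only when $m.id=s(x)$, so every received message comes from the node $y$ that $x$ currently records as its successor, carrying the value $v$ that just became $y$'s new $\max N(y)$. I would first prove a structural lemma: $v>x$. This uses that $y=s(x)$ forces $x=p(y)$ and hence $x\in N(y)$, so $\max N(y)\ge x$ at the moment of the send, giving $v>\max N(y)\ge x$. Thus each received value lies in $\{x+1,\dots,\max\text{id}\}$, a set of size at most $n$. I would then rule out repeated receipts of the same value: from a fixed successor $y$, monotonicity of $\max N(y)$ shows any specific $v$ is emitted by $y$ at most once; across successor changes I would use the strict monotonicity of $s(x)$ (enforced by the \emph{pred-request} handler which keeps the larger of the current and new successor) together with the invariant that $v$ has already been inserted into $N(x)$ after the first receipt, to block a second receipt of the same $v$.

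The main obstacle is the cross-successor part of the received bound. A naive product bound gives $\mathcal{O}(n)$ successors times $\mathcal{O}(n)$ values per successor, which is $\mathcal{O}(n^2)$ and too weak. I expect the tight argument to require either (i) a charging scheme that assigns each received \emph{forward-head} to a freshly learned id in $N(x)$, bounded by $n$, or (ii) a direct proof that the sequence of values received by $x$ is strictly monotone across successor transitions, which would need the auxiliary fact that whenever a new successor $y'>y$ takes over, $y'$ already knows every id $y$ had previously forwarded to $x$, so that $y'$'s next \emph{forward-head} must exceed the last one from $y$. The delicate case is when $s(x)$ is transiently nullified by a \emph{delete-successor} and a smaller node then becomes the successor; I would handle it by showing these inversions occur only $\mathcal{O}(1)$ times per node during the $\mathcal{O}(n)$-round linearization phase, contributing only lower-order terms to the message count.
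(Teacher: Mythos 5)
Your sent-side bound is correct and in fact a little stronger than what the paper uses: the paper simply notes that \textbf{forwardmax} emits at most one \emph{forward-head} per round and multiplies by the $\mathcal O(n)$ rounds of the phase, whereas your observation that each send coincides with a strict increase of $\max N(x)$ (which is non-decreasing and ranges over at most $n$ ids) bounds the sends by $n-1$ outright.

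The genuine gap is on the received side. You start from the premise that every received \emph{forward-head} comes from the node currently stored as $s(x)$, citing the guard $m.id=s(x)$. That guard only decides whether the message is \emph{processed}; the work measure counts every message $x$ \emph{receives}, including those it discards. Any node $u$ with $p(u)=x$ executes \textbf{forwardmax} and addresses its \emph{forward-head} to $p(u)=x$ regardless of whether $x$ regards $u$ as its successor, and early in the phase many nodes may simultaneously have $x$ as their predecessor. These stray messages are precisely what the paper's proof must control, and it does so with a counting argument absent from your proposal: each such $u$ can be a ``possible successor'' of $x$ at most once, since when $x$ handles $u$'s \emph{pred-request} it either adopts $u$ as $s(x)$ or redirects it via \emph{new-predecessor} to a strictly smaller predecessor, and a node's predecessor id only decreases thereafter, so $u$ never again sends \emph{forward-head} messages to $x$ without being its successor; this caps the non-successor receipts at $n$, while the actual successor contributes at most one per round, i.e. $\mathcal O(n)$ over the phase. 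Within your own framework the successor-only part is also not closed: the cross-successor monotonicity of received values is left as an admitted obstacle, the charging to ``freshly learned ids'' fails because a forwarded value need only exceed the sender's knowledge, not $x$'s, and the claim that successor ``inversions'' after a \emph{delete-successor} happen only $\mathcal O(1)$ times per node is asserted without proof. Note that no value argument is needed for the successor's messages at all: one \emph{forward-head} per round over $\mathcal O(n)$ rounds already suffices, so the entire difficulty — and the missing idea — is the bound on messages from nodes $u\neq s(x)$ with $p(u)=x$.
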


\begin{proof}
Moreover, a node sends at most one \emph{forward-head} message per round.
The number of \emph{forward-head} messages it receives during the linearization phase is limited by $\mathcal O(n)$. That is because each node $x$ receives one \emph{forward-head} message from its successor in a round, and possibly from other possible successors, let $u$ be such one, for which $p(u)=x$. But $u$ can only be once a possible successor of $x$, since at the next round it either will be forwarded to $s(x)$ and will never have $x$ as its predecessor again, or it becomes $s(x)$.
Since each node can be only once a possible successor for $x$, the number of \emph{forward-head} messages sent through all possible successors is limited by $n$. So, the number of \emph{forward-head} messages it receives during the linearization phase is limited by $\mathcal O(n)$.
\end{proof}

\subsection{Maintenance work}

\begin{lemma}\label{lem:maint}
As soon as the network forms a stable clique with  a stable list as a spanning tree, i.e. the network is in a legal state, each node sends and receives at most $\mathcal O(1)$ messages in each round.
\end{lemma}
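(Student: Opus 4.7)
The plan is to walk through every periodic action and every receive-action branch in the pseudocode and argue, under the legal-state invariants, that each contributes only $\mathcal O(1)$ messages per node per round. The key enabling facts are: every node has $N(x)=V\setminus\{x\}$ (so every \texttt{insert} guard ``if $m.id\notin N(x)$'' fails trivially), every non-root has a unique predecessor $p(x)$ and at most one successor $s(x)$ which coincide with the sorted-list structure, the root is the only head, and $N(x)$ is a fixed circular list whose round-robin pointer advances by one per round.

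First I would bound the periodic sends. The action \textbf{forwardtopred} sends a single \emph{forward-from-successor} to $p(x)$ (none for the root). The action \textbf{forwardtosuc} sends a single \emph{forward-from-predecessor} to $s(x)$ (none for the tail). The action \textbf{checkifhead} sends one \emph{pred-request} to $p(x)$ for every non-root and, for the root, one \emph{scan} message per round to the current $N(x).\mathrm{head}$. The action \textbf{forwardmax} fires only when $S(x)\neq\emptyset$; in the legal state this can only occur at the single node scanned by the root in that round, in which case exactly one \emph{scanack} is emitted and no \emph{forward-head} since the root is already the maximum in everyone's $N(x)$. This shows the periodic outgoing traffic at every node is $\mathcal O(1)$.

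Next I would count incoming messages and their reactive sends. Each non-root receives exactly one \emph{forward-from-successor} and one \emph{pred-request} from its unique $s(x)$, one \emph{forward-from-predecessor} and one \emph{pred-accept} from its unique $p(x)$, and at most one \emph{activate} per round; the root receives the symmetric constant load plus one \emph{scanack} from whichever node it scanned. Because every sender coincides with the recipient's recorded $p(x)$ or $s(x)$, the corrective branches (\emph{delete-successor}, the $m.id_2<p(x)$ clause of \emph{new-predecessor}, \emph{deactivate}) are never taken. The handlers that do fire each emit $\mathcal O(1)$ response messages: \emph{pred-request} produces one \emph{pred-accept} plus one (harmless) \emph{new-predecessor}; \emph{pred-accept} produces one \emph{activate} forwarded to $s(x)$; \emph{activate} produces one further \emph{activate} forwarded to $s(x)$; \emph{scan} produces the scanack at the next \textbf{forwardmax}; and \emph{forward-from-successor}, \emph{forward-from-predecessor}, \emph{scanack}, \emph{forward-head} produce no outgoing message at all.

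The thing I expect to be the main obstacle is ruling out unintended chain reactions: a single stray \emph{delete-successor}, a \emph{new-predecessor} that actually changes a pointer, or a \emph{deactivate} could propagate along the list and blow the bound up to $\Theta(n)$. To close this, I would verify explicitly that (i) the reactive \emph{new-predecessor} emitted by a \emph{pred-request} handler carries $m.id_2=s(x)$ equal to the requesting node's own id, so the guard $m.id_2>x$ fails at the recipient and no state changes; (ii) no sender of any received message disagrees with the recipient's $p(x)$ or $s(x)$, so no \emph{delete-successor} is ever generated; (iii) every node is active with a valid predecessor, so no \emph{deactivate} ever arises. Adding the $\mathcal O(1)$ periodic sends, the $\mathcal O(1)$ incoming messages, and the $\mathcal O(1)$ reactive sends per incoming message then yields the claimed $\mathcal O(1)$ bound on messages sent and received by every node in every round of the legal state.
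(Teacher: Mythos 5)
Your proposal follows essentially the same route as the paper's proof: under the legal-state invariants you enumerate the message types exchanged per round (one \emph{pred-request}/\emph{pred-accept} pair and one \emph{forward-from-successor}/\emph{forward-from-predecessor} pair per node, one \emph{scan}/\emph{scanack} pair involving the head) and conclude the $\mathcal O(1)$ bound; the paper's proof is just a terser version of this enumeration, and your extra checks (the \emph{new-predecessor} whose forwarded id equals the requester's own id and is therefore absorbed, no \emph{delete-successor} ever being generated, no \emph{deactivate} ever arising) are correct refinements that the paper omits.

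The one place where your accounting does not close is the \emph{activate} traffic, which is precisely the kind of cascade you set out to exclude. You claim each node receives at most one \emph{activate} per round, but you also state that a received \emph{pred-accept} triggers one \emph{activate} to $s(x)$ and a received \emph{activate} triggers one further \emph{activate} to $s(x)$. Taken together (and this is what the pseudocode literally says, since the send is not conditioned on a status change), the node at position $k$ of the list receives one \emph{pred-accept} plus all \emph{activate}s relayed by the node at position $k-1$, so in steady state the per-round \emph{activate} load grows linearly along the list, contradicting both your cap of one and the lemma itself. What rescues the bound is the intended semantics, namely that \emph{activate}/\emph{deactivate} only signal status changes and that in the legal state every node remains active, so no \emph{activate} is generated at all; you should state this explicitly (as you did for \emph{deactivate}) instead of letting every \emph{pred-accept} spawn an \emph{activate}. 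In fairness, the paper's own proof silently ignores \emph{activate} messages altogether, so the ambiguity originates in the pseudocode, but as written your two statements about \emph{activate} are mutually inconsistent and the gap sits exactly at the point the lemma turns on.
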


\begin{proof}
In a legal state all nodes form a sorted list. Thus, each node has exactly one stable successor and one stable predecessor. Then each node sends and receives one \emph{pred-request} and one \emph{pred-accept} message. Each node sends one \emph{forward-from-successor} and one \emph{forward-from-predecessor} message. Moreover there is one head that sends one \emph{scan} message, which is received by one other node, and receives one \emph{scanack}, sent by the scanned node. Thus, each node sends and receives $\mathcal O(1)$ messages in a stable state.
\end{proof}

\section{Single Join and Leave Event}
The case of arbitrary churn is hard to analyze formally. Thus, we will show that the clique can efficiently recover considering a single join or leave event in a legal state.

\begin{theorem}\label{theorem:join}
In a legal state it takes $\mathcal O(n)$ rounds and messages to recover and stabilize after a new node joins the network. It takes $\mathcal O(1)$ rounds and messages to recover the clique after a node leaves the network.
\end{theorem}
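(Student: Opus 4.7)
The plan is to treat the leave case and the join case separately. For the leave, suppose a node $v$ vanishes from the legal state. The only structural damage is that $v$'s immediate predecessor $w = p(v)$ loses its successor and $v$'s immediate successor $u = s(v)$ loses its predecessor. Because the model assumes neighborhoods are verified, every other node silently drops $v$ from its $N(\cdot)$ without sending anything. On the next periodic action $u$ detects that $p(u)=v$ is invalid and recomputes $p(u) = \min\{y \in N(u) : y > u\}$, which equals $w$ since the former clique supplied $u$ with every remaining id. A single \emph{pred-request}/\emph{pred-accept} pair between $u$ and $w$ restores both pointers, and any stale $s(w)=v$ is cleared by the validity test in \emph{forwardtosuc} or by a \emph{delete-successor}. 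This gives $\mathcal{O}(1)$ rounds and $\mathcal{O}(1)$ extra messages.

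For the join, let $w$ enter with initial neighborhood $N(w) \subseteq V$, and let $v_1 > v_2 > \cdots > v_n$ be the current sorted list. Let $k$ be the rank with $v_k > w > v_{k+1}$; the boundary cases (new maximum or new minimum) are handled by straightforward variants in which $w$ becomes the new root or the new tail. I argue in two stages: first thread $w$ into its correct position, then let the ordinary round-robin forwarding complete the clique. For the threading stage, $w$ selects as its initial $p(w)$ the smallest element of $N(w)$ above itself, say $v_i$ with $i \leq k$. Inspecting the \emph{pred-request} action shows that whenever $w$ targets some $v_j$ with $j < k$, we have $w < s(v_j) = v_{j+1}$, so $v_j$ keeps its successor and replies with a \emph{new-predecessor} message that pushes $w$ down to $v_{j+1}$. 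Hence $w$ slides one rank per round and reaches $v_k$ after at most $k-i \leq n$ rounds. At $v_k$ the other branch fires: $v_k$ sets $s(v_k)=w$, sends \emph{pred-accept} to $w$ and \emph{new-predecessor} to $v_{k+1}$, which accepts $w$ as its new predecessor and issues a final \emph{pred-request} that $w$ accepts. This cascade emits one extra \emph{new-predecessor} plus one \emph{pred-request} per round, totalling $\mathcal{O}(n)$ extra messages.

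For the dissemination stage, $w$'s id is placed at the head of $N(v_k)$ when $w$ first forwards it, and by Lemma \ref{lem:id-delay} the \emph{forward-from-successor} round-robin delivers $w$ to the root in $\mathcal{O}(n)$ rounds. The root then inserts $w$ at the tail of $L(\cdot)$ via \emph{checkifhead}, and the \emph{forward-from-predecessor} chain delivers $w$ down the list to every node in $\mathcal{O}(n)$ further rounds. Symmetrically, $w$ receives the $n-1$ existing ids one per round from $p(w)$ via \emph{forward-from-predecessor}, attaining full knowledge in $\mathcal{O}(n)$ rounds. Each of these messages is one that the sender would have transmitted anyway as maintenance work (Lemma \ref{lem:maint}), so none is charged to the extra-message budget. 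Combined with the threading stage this yields $\mathcal{O}(n)$ rounds and $\mathcal{O}(n)$ extra messages overall.

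The step I expect to be the main obstacle is the threading cascade: one must verify that the prefix strictly above $v_i$ and the suffix strictly below $v_{k+1}$ retain the invariant $s(p(x))=x$ throughout the cascade, so that Lemma \ref{lem:id-delay} continues to apply and the dissemination analysis from Phase 3 can be reused as a black box. A line-by-line check of the pseudocode for \emph{pred-request}, \emph{new-predecessor}, \emph{pred-accept}, and the accompanying \emph{activate}/\emph{deactivate} chain should confirm that all pointer updates during the cascade are confined to the segment $v_i, \ldots, v_{k+1}$, and that the brief inactive window at $w$ is cleared by an \emph{activate} message as soon as $w$'s final predecessor is acknowledged.
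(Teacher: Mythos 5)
Your leave argument and the sliding (``threading'') stage of your join argument essentially reproduce the paper's proof: the successor of the departed node re-targets the departed node's predecessor with one \emph{pred-request}/\emph{pred-accept} exchange, and a joining node that has a larger known neighbor is pushed down the sorted list one position per round by \emph{new-predecessor} messages, with the new id then propagated to the root and back down the list in $\mathcal O(n)$ rounds. Two small inaccuracies in the leave case: if the departed node was the head, its successor must correctly become the new head (the paper notes this explicitly, your argument silently assumes $p(v)$ exists), and the stale pointer $s(w)=v$ is not removed by the validity test in \emph{forwardtosuc} (that test only fires when $s(x)>x$, and here $v<w$); it disappears because the model assumes no false identifiers survive in local variables, which is what the paper relies on when it says the new \emph{pred-request} ``will be accepted as the node has no other successor.''

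The genuine gap is in the join case. Your threading stage starts from ``$w$ selects as its initial $p(w)$ the smallest element of $N(w)$ above itself,'' which presupposes that $N(w)$ contains some node larger than $w$. Since a joining node may arrive with a single edge $(u,v)$ to an arbitrary node of the clique, it can happen that $v<u$ while $u$ is not the global maximum; then $u$ has no candidate predecessor at all, considers itself a head, and the \emph{pred-request}/\emph{new-predecessor} cascade never starts, so your analysis does not apply. Your remark about ``boundary cases (new maximum or new minimum)'' does not cover this, because here $w$ is neither. The paper devotes its second case to exactly this situation: $v$ treats $u$ as a new id and forwards it toward the root (and, since $u$ scans $v$ as a head, $v$ answers with a \emph{scanack} carrying $\max N(v)$, i.e.\ the current root), so within $\mathcal O(n)$ rounds the root learns of and scans $u$, $u$ adopts the root as its predecessor, and only then does the situation reduce to your case 1 with its $\mathcal O(n)$-round slide. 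Adding this subcase (and checking it costs only $\mathcal O(n)$ extra messages, which it does, since the forwarding rides on the round-robin maintenance traffic and the scan/scanack exchange is constant per round) closes the argument; without it the proof is incomplete for a legitimate family of join events.
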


\begin{proof}
If a node $u$ joins the network it creates an edge $(u,v)$ to a node $v$ in the clique. If $v>u$, $u$ sends a \emph{pred-request} to $v$, $v$ then either accepts $u$ as its successor or creates an edge from $u$ to $v$'s successor. It takes at most $\mathcal O(n)$ rounds until $v$ reaches its final position in the sorted list. Additionally $v$ sends $u$'s id to its predecessor, and after $\mathcal O(n)$ rounds the head inserts $u$ to its neighborhood.
If $v<u$ $v$ sends $u$'s id to its predecessor, because it is a new id. Then it takes at most $\mathcal O(n)$ rounds until the head receives $u$'s id and scans $u$, then $u$ assumes the head to be its predecessor and case 1 holds.
After $\mathcal O(n)$ further rounds each nodes receives $u$'s id and $u$ receives the id of all other nodes in the network. Thus, after $\mathcal O(n)$ rounds after a join the nodes form a clique and the sorted list is linearized.

Obviously a clique remains a clique in case a node $u$ leaves the network. Also the sorted list is immediately repaired, as the successor of the removed node, assumes $u$' predecessor to be its predecessor and sends a \emph{pred-request}, which will be accepted as the node has no other successor. Note that if $u$ is the head of the list, $u$'s successor will recognize that there is no node with a larger id in its neighborhood and will correctly assume to be a head of a list and proceed the scanning.
\end{proof}

\section{Conclusion}
In this paper we introduced a local self-stabilizing time-and work-efficient algorithm that forms a clique out of any weakly connected graph. By forming a clique our algorithm also solves the resource discovery problem, as each node is aware of any other node in the network. Our algorithm is the first algorithm that solves resource discovery in optimal message complexity. Furthermore our algorithm is self-stabilizing and thus can handle deletions of edges and joining or leaving nodes.

%=========================================================================
%  Bibliography
%=========================================================================

% When using Bibtex, the following form may be used.
%\newpage
\bibliographystyle{plain}

%=========================================================================
%  Appendix
%=========================================================================

\pagebreak
\begin{appendix}

\end{appendix}

\end{document}